\newtheorem{theorem}{Theorem}[section]
\newtheorem{lemma}[theorem]{Lemma}
\newtheorem{proposition}[theorem]{Proposition}
\newtheorem{corollary}[theorem]{Corollary}
\newtheorem{definition}{Definition}[section]
\newcommand{\abs}[1]{\ensuremath{|#1|}}
\newcommand{\norm}[2]{\ensuremath{|\!|#1|\!|_{#2}}}
\newcommand{\Norm}[2]{\ensuremath{\left|\!\left|#1\right|\!\right|_{#2}}}
\newcommand{\tr}{\textnormal{tr}}
\newcommand{\trace}[1]{\ensuremath{\tr (#1)}}
\newcommand{\braket}[2]{\langle #1 | #2 \rangle}
\renewcommand{\d}[1]{\ensuremath{\textnormal{d}#1}}
\newcommand{\iso}{\cong}
\newcommand{\id}{\ensuremath{\mathds{1}}}
\newcommand{\opid}{\ensuremath{\mathcal{I}}}
\newcommand{\linops}[1]{\ensuremath{\mathcal{L}(#1)}}
\newcommand{\cA}{\mathcal{A}}
\newcommand{\cB}{\mathcal{B}}
\newcommand{\cE}{\mathcal{E}}
\newcommand{\cH}{\mathcal{H}}
\newcommand{\cJ}{\mathcal{J}}
\newcommand{\cM}{\mathcal{M}}
\newcommand{\cN}{\mathcal{N}}
\newcommand{\cP}{\mathcal{P}}
\newcommand{\cR}{\mathcal{R}}
\newcommand{\cT}{\mathcal{T}}
\newcommand{\cU}{\mathcal{U}}
\newcommand{\cV}{\mathcal{V}}
\newcommand{\ii}{\id}
\begin{document}
\title{Spectral convergence bounds for classical and quantum Markov processes}
\author{Oleg Szehr}
\email{oleg.szehr@posteo.de}
\affiliation{Department of Mathematics, Technische Universit\"{a}t M\"{u}nchen, 85748 Garching, Germany}
\author{David~Reeb}
\email{david.reeb@tum.de}
\affiliation{Department of Mathematics, Technische Universit\"{a}t M\"{u}nchen, 85748 Garching, Germany}
\author{Michael~M.~Wolf}
\email{m.wolf@tum.de}
\affiliation{Department of Mathematics, Technische Universit\"{a}t M\"{u}nchen, 85748 Garching, Germany}

\date{\today}

\begin{abstract}  
We introduce a new framework that yields spectral bounds on norms of functions of transition maps for finite, homogeneous Markov chains. The techniques employed work for bounded semigroups, in particular for classical as well as for quantum Markov chains and they do not require additional assumptions like detailed balance, irreducibility or aperiodicity. We use the method in order to derive convergence bounds that improve significantly upon known spectral bounds. The core technical observation is that power-boundedness of transition maps of Markov chains enables a Wiener algebra functional calculus in order to upper bound any norm of any holomorphic function of the transition map. Finally, we discuss how general detailed balance conditions for quantum Markov processes lead to spectral convergence bounds.
\end{abstract}

\maketitle
\tableofcontents
\section{Introduction}
Across scientific disciplines, Markov chains are ubiquitous in algorithms as well as in models for time evolutions. In many cases one is interested in when their limit behavior is setting in. For algorithms this is often necessary in order to extract the right information and for time evolutions of physical systems this is the time scale on which relaxation or equilibration takes place. Some of the most widespread tools for bounding this time scale are based on the spectrum of the transition map. For time-homogeneous Markov chains with finite state space, the transition map is a stochastic matrix in the context of classical probability distributions and a completely positive trace-preserving map in the quantum case. Since these maps have spectral radius equal to $1$, it is somehow clear that only eigenvalues of magnitude $1$ survive the limit, that the largest subdominant eigenvalue governs the speed of convergence, and that the rest of the spectrum only matters on shorter time scales. 
Let $\cT$ and $\cT_\infty$ be the transition map and its asymptotic part, respectively. We seek convergence estimates of the form
\begin{align}
\Norm{\cT^n-\cT_\infty^n}{}\leq K\mu^{n}\label{firstbound}
\end{align}
after $n$ time steps, where $\mu$ is the magnitude of the largest eigenvalue of $\cT$ inside the open unit disc and $K$ depends on the spectrum of $\cT$, on $n$ and on the dimension of the underlying space. We demand that the dependence of $K$ on $n$ is not exponential, capturing the intuition that the convergence is determined by an exponential decay as $\mu^n$ at larger timescale, while for smaller $n$ the whole spectral data is relevant. Such bounds are of general interest for the theory of Markov chains, and they are especially important for stochastic algorithms, which are widely used in statistics and computer science. They are related to the sensitivity of the chain to perturbations \cite{Mit1,Mit2,wir}, are used to study ``cut-off\rq\rq{} phenomena \cite{cutoff} and random walks on groups \cite{walk}. More generally, the main innovation of this article lies in the development of \emph{a framework that yields spectral estimates in the context of Markov chains.}

Before describing our main results, we mention two traditional, linear algebraic, approaches to bounding convergence times of classical Markov chains as in \eqref{firstbound}. A Jordan decomposition of the difference $\cT-\cT_\infty$ yields a bound of the form Equation~\eqref{firstbound} with {$K=k\:\mu^{-d_\mu+1}n^{d_\mu-1}$}, where $d_\mu$ is the size of the largest Jordan block corresponding to any eigenvalue of magnitude $\mu$ and $k$ is constant with respect to $n$ but depends on $\cT$ as it is essentially the condition number of the similarity transformation to Jordan normal form. Unfortunately, there is no a priori bound on this condition number.
An alternative way is to use Schur's instead of Jordan's normal form. This leads indeed to an expression as in Equation~\eqref{firstbound} where $K$ can be bounded independent of $\cT$, albeit not of $n$, and we obtain roughly $K\sim \mu^{-D+1}(Dn)^D$, where $D$ is the dimension of the underlying vector space. (See Section~\ref{schurjordan} for details.) Needless to say, this ``constant'' seems to be far from optimal, especially it does not capture the (correct) asymptotic $n$-dependence of the Jordan bound.

When proving bounds of the form of Equation~\eqref{firstbound}, one typically employs additional properties of the Markov chain such as detailed balance, irreducibility, aperiodicity, uniqueness of the fixed point, Gibbs distribution of the stationary state, etc.. Clearly, these assumptions are not always fulfilled---in particular in the quantum context detailed balance seems to be a less natural assumption and, furthermore, especially in the area of dissipative quantum computing \cite{VWC2008} and dissipative state preparation \cite{DiehlNature,VWC2008,KrausPRA}, one aims at preparing rank deficient states.

For classical Markov chains convergence estimates have been widely studied \cite{seneta,mixing} and estimates based on the Jordan and Schur decompositions have been known for many years. Although the latter are generally referred to as spectral convergence bounds, they do not provide a satisfactory spectral description of the convergence of a Markov chain. While in case of the Jordan bound it is not possible to compute $K$ in terms of the eigenvalues of $\cT$, the Schur bound cannot provide the correct asymptotic behavior and does not reflect the full spectral structure of $\cT$. So far there is no a priori estimate as in Inequality~\eqref{firstbound} such that $K$ can \emph{simply be inferred from the localization of the eigenvalues} of $\cT$ and such that one obtains the correct asymptotic behavior of the chain. One goal of the present work is to close this gap and to understand what information the spectrum of the transition map of a classical or quantum Markov chain carries about the speed at which it approaches its stationary behavior, i.e., to determine $K$ in terms of the spectrum of $\cT$.

Our primary interest lies in the study of classical and quantum Markovian evolutions. However, to obtain a unified picture, in this article we will state our results more generally for bounded semigroups of linear maps. Any endomorphism $\cT$ of a vector space $\cV$ naturally generates a semigroup consisting of all $n$-fold concatenations $\cT^n$, $n\in\mathbb{N}$. In our analysis we shall assume that the vector space of endomorphisms of $\cV$ carries a norm and that the map $\cT$ is \emph{power-bounded} with respect to that norm. That means there is a constant $C$ such that, for any $n$, $\Norm{\cT^n}{}$ is bounded by $C$. This is equivalent to saying that the semigroup $(\cT^n)_{n\geq0}$ generated by $\cT$ is bounded. The framework of bounded semigroups naturally incorporates both classical and quantum Markov chains (see Section~\ref{clquMach}).

We start our discussion by analyzing the asymptotic behavior of a bounded semigroup $(\cT^n)_{n\geq0}$. We discuss spectrum related properties of $\cT$ that generate a bounded semigroup and define the asymptotic part of the evolution $\cT_\infty$ in Section~\ref{limit}. In Section~\ref{schurjordan} we extend the known convergence estimates based on the Jordan and Schur decompositions to cope with bounded semigroups. Implicitly, the analysis covers quantum Markov processes, where we state new convergence estimates.
Section~\ref{secmain} contains our main result, a convergence estimate with the form of Equation~\eqref{firstbound}, where $K$ is fully determined by $n$ and the spectrum of $\cT$. We start Section~\ref{secmain} with a mathematical primer, Section~\ref{ff}, containing an introduction to an entirely new mathematical toolbox in the context of Markov processes. We proceed by analyzing what information can be inferred from the spectrum of $\cT$ about the speed at which $(\cT^n)_{n\geq0}$ approaches its asymptotic behavior, Section~\ref{semi}. The methods, which we employ, enable us in principle to derive spectral bounds on norms of arbitrary functions of transition maps. When applied to power functions, we basically obtain the sought convergence bounds. We discuss in Subsection~\ref{discussion} how our new bound outperforms the convergence estimates based on the Jordan and Schur decompositions.

Nevertheless, it turns out that for many application, such as dissipative quantum computation and state preparation, the convergence estimates obtained still are insufficient to prove the efficiency of a possible implementation. The problem is that the convergence time grows with $D$, which in turn is exponential in the number of constituent particles (Section~\ref{discussion}). We discuss aspects related to the optimality of our new estimate as well as the convergence speed of contractive Hilbert space semigroups in Section~\ref{contractive}. We prove that stronger estimates, i.e. estimates such that roughly $\log{(D)}$ time steps bring the chain close to stationarity cannot rely on the spectrum of the transition map alone, the latter simply does not contain sufficient information.

As an approach to better convergence estimates in Section~\ref{secdetbal} we extend the detailed balance condition for classical Markov chains and define this property in the context of bounded semigroups, which then includes quantum evolutions. The core theorem of this section is an extension of a convergence estimate that is frequently used to prove cut-off behavior for classical Markov chains (Section~\ref{llsec}).

Our discussion focuses on general bounded semigroups but the corresponding statements about classical and quantum Markov processes are implicit, and we will frequently use these for illustration. In what follows one can think of $\cT$ either as a quantum channel or an ordinary stochastic matrix.

\section{Preliminaries}\label{PREL}
\subsection{Bounded Semigroups}
\label{prel:not}

Throughout this paper $\cV$ will be a real or complex vector space of finite dimension $D$. The set of linear endomorphisms of $\cV$ will be denoted by $\linops{\cV}$, which shall be endowed with a norm $\Norm{\cdot}{}$.
For a given $\cT\in\linops{\cV}$ we consider the semigroup $(\cT^n)_{n\geq0}=\{ \cT^n\:| n\in\mathbb{N}\}$ of linear maps on $\cV$ generated by $\cT$. Throughout, we assume that $(\cT^n)_{n\geq0}$ is bounded, i.e., there is a constant $C>0$ such that 
$\sup_{n\geq0}\Norm{\cT^n}{}\leq C$. 

Our main approach applies for $(\cT^n)_{n\geq0}$ with a general norm. Nevertheless, for certain results concerning convergence of classical and quantum Markov chains it will be convenient to endow $\cV$ with a scalar product $\braket{\cdot}{\cdot}$. We will consider the induced Hilbert space norm (shortly, 2-norm) $\Norm{v}{2}=\sqrt{\braket{v}{v}}$ and the operator norm (shortly, $\infty$-norm) on $\linops{\cV}$ defined by $\Norm{\cT}{\infty}=\sup_{v\neq0}\frac{\norm{\cT(v)}{2}}{\norm{v}{2}}$.
In some of our examples (e.g., classical Markov chains) it is useful to fix an orthonormal basis $\{e_i\}_{i=1,...,D}$ for $\cV$. In this case we write $T$ for the matrix representation of $\cT$ with respect to $\{e_i\}_{i}$, i.e., $T_{ij}= \braket{e_i}{ \cT(e_j)}$. We will emphasize whether or not $\cV$ has such additional structure in the corresponding sections.

\subsection{Classical and Quantum Markov chains}\label{clquMach}

We briefly review the definitions of classical and quantum Markov chains and discuss certain related concepts.

A classical, finite and time homogeneous Markov process is characterized by a semigroup generated by a classical stochastic matrix $T$. More precisely, in this scenario $\cV\iso\mathbb{R}^D$ equipped with the canonical basis $\{e_i\}_{i}$ and standard scalar product. The assertion that $T$ is stochastic is equivalent to  $T_{ij}\geq0$ and $\sum_iT_{ij}=1$. The latter is equivalent to saying that the vector $e=\sum_{i=1}^De_i\in\cV$ is fixed by the adjoint map, $T^*(e)=e$.
In the context of classical Markov chains the 1-norm plays an exceptional role. For $v\in\cV$ we write $v_i=\braket{e_i}{v}$ and define $\Norm{v}{1}=\sum_{i=1}^D{\abs{v_i}}$. The induced norm on the set of  matrices $M$ acting on $\cV$ is called the 1-to-1 norm,
$$\norm{M}{1\rightarrow1}=\sup_{v\neq0}\frac{\norm{Mv}{1}}{\norm{v}{1}}.$$
The 1-to-1 norm and the $\infty$-norm (i.e. the 2-to-2 norm) are equivalent with
\begin{align}
D^{-1/2}\Norm{M}{\infty}\leq\Norm{M}{1\rightarrow1}\leq D^{1/2}\Norm{M}{\infty}\label{clequival}.
\end{align}
It is easily seen that $\Norm{T}{1\rightarrow1}=1$ for any stochastic matrix $T$. We note that if $\Norm{\cdot}{}$ is any norm such that $\Norm{T}{}\leq C$ holds for all stochastic matrices $T$, then $\Norm{T^n}{}\leq C\ \forall n\in\mathbb{N}$; that is the Markov chain constitutes a bounded semigroup with constant $C$. Since we are working in finite dimensions, such a semigroup is bounded with respect to any norm.

A time homogeneous quantum Markov chain is also characterized by a semigroup. In the context of quantum evolutions, however, the space $\cV$ has different and additional structure. In this article we think of $\cV$ as the real vector space consisting of Hermitian matrices acting on a complex Hilbert space of dimension $d$, i.e., $D=d^2$. A matrix $\rho\in\cV$ that is positive semidefinite ($\rho\geq0$) and has unit trace ($\tr{[\rho]}=1$) is referred to as a quantum state. An element $\cT\in\linops{\cV}$ is called positive iff $X\geq0$ implies $\cT(X)\geq0$ for any $X\in\cV$, and trace-preserving iff $\tr{[\cT(X)]}=\tr{[X]}\ \forall X\in\cV$. $\cT$ is trace-preserving iff the adjoint $\cT^*$ of $\cT$ with respect to the Hilbert-Schmidt inner product $\braket{X}{Y}=\trace{XY}$ on $\cV$ preserves the identity matrix, $\cT^*(\ii)=\ii$. If $\cT\otimes\opid$, with $\opid$ being the operator identity, acts as a positive map on $\cV\otimes\cV$, then $\cT$ is called completely positive \cite{NielsenChuang, Paulsen}.
We denote by $\mathfrak{T}$ the subset of $\linops{\cV}$ containing trace preserving and positive maps (TPPMs) and by $\mathfrak{T}_+\subset\mathfrak{T}$ the set of completely positive maps in $\mathfrak{T}$ (TPCPMs). The latter describe the dynamics of a quantum system, whenever the evolution of the system is independent of its history, and they are called quantum channels in the realm of quantum information theory.

For $X\in\cV$ we denote by $\Norm{X}{1}$ the Schatten 1-norm of $X$. The induced distance $\Norm{\rho-\sigma}{1}$ of two quantum states $\rho$ and $\sigma$ corresponds to the maximum probability to detect a difference between $\rho$ and $\sigma$ in an experiment, i.e.
\begin{align*}
\Norm{\rho-\sigma}{1}=\sup_{\Norm{O}{\infty}\leq1}\abs{\trace{O(\rho-\sigma)}},
\end{align*}
where $\Norm{O}{\infty}$ stands for the largest singular value of $O\in\cV$. For linear maps $\cM\in\linops{\cV}$ we define\footnote{Note that in this article we define the 1-to-1 norm with the supremum taken over Hermitian matrices. Alternatively, the supremum could be taken over all matrices. The resulting norms are different, but the latter can be upper bounded in terms of 2 times the former.} the induced 1-to-1-norm via
$$\norm{\cM}{1\rightarrow1}=\sup_{X\neq0}\frac{\norm{\cM(X)}{1}}{\norm{X}{1}}.$$
The diamond norm is the ``stabilized version\rq\rq{} of the 1-to-1 norm,
$$\norm{\cM}{\diamond}=\Norm{\cM\otimes\opid}{1\rightarrow1},$$
where $\opid$ denotes the operator identity in $\linops{\cV}$. It is the dual of the norm of complete boundedness (CB-norm), i.e., we have $\Norm{\cM}{\diamond}=\Norm{\cM^*}{CB}$. The diamond norm and the 1-to-1 norm are equivalent with \cite{smith}
\begin{align}
\Norm{\cM}{1\rightarrow1}\leq\Norm{\cM}{\diamond}\leq D^{1/2}\Norm{\cM}{1\rightarrow1}.\label{diamond1to1}
\end{align}
For any quantum channel $\cT$ we have $\norm{\cT}{1\rightarrow1}=\norm{\cT}{\diamond}=1$ \cite{Paulsen}. The distance $\Norm{\cE-\cT}{\diamond}$ of two channels $\cE,\cT$ measures how well these channels can be distinguished by any quantum experiment. In the quantum context the 1-to-1 norm and the $\infty$-norm (i.e., the 2-to-2 norm) are equivalent with
\begin{align}
D^{-1/4}\Norm{\cT}{\infty}\leq\Norm{\cT}{1\rightarrow1}\leq D^{1/4}\Norm{\cT}{\infty}\label{equival}.
\end{align}
Note that due to the different structure of the underlying space $\cV$ in case of quantum Markov chains, the above differs from the Inequalities~\eqref{clequival}.

If we are given a norm such that $\norm{\cT}{}\leq C\ \forall \cT\in\mathfrak{T}$ (or $\forall\cT\in\mathfrak{T}_+$) the quantum Markov chain generated by $\cT$ constitutes a semigroup bounded by $C$. Again, due to $D<\infty$, this implies that the semigroup is bounded with respect to any norm.
%
%
%
\subsection{Spectral properties}
To each linear map $\cM\in\linops{\cV}$ we can assign a spectrum $\sigma(\cM)$ via the usual eigenvalue equation: we have $\lambda\in\sigma(\cM)$ if and only if there is $X\neq0$ with $\cM(X)=\lambda X$. We write $m_\cM$ for the minimal polynomial associated with $\cM$ (i.e.,~the minimal degree, monic polynomial that annihilates $\cM$, $m_\cM(\cM)=0$) and $\abs{m_\cM}$ for the number of linear factors in $m_\cM$. Another important object is the Blaschke product associated with $m_\cM$,
\begin{align}
B(z)=\prod_{m_\cM}\frac{z-\lambda_i}{1-\bar{\lambda}_i z},\label{bp}
\end{align}
where the product is taken over all $i$ such that the linear factor $z-\lambda_i$ occurs in $m_\cM$ \emph{respecting} multiplicities. Thus, the numerator of $B$ as defined here is exactly the corresponding minimal polynomial, $m_\cM$. 

For convenience, we shall always assume that the eigenvalues in $\sigma(\cM)$ are arranged such that their magnitudes are non-decreasing. (This ordering is not unique when several eigenvalues have the same magnitude. This ambiguity will, however, be irrelevant in the following. Whenever the situation occurs that we pick an eigenvalue of a certain magnitude $\abs{\lambda}$ we mean that we can take \emph{any} eigenvalue that has this property.)

For any $\cM\in\linops{\cV}$ the largest magnitude of all eigenvalues is the spectral radius, which we denote as $\mu$. It follows from Gelfand\rq{}s formula $\mu=\lim_{k\rightarrow\infty} \Norm{\cM^k}{}^{1/k}$ \cite{specevans} that the spectral radius of any element of a bounded semigroup is at most $1$. For stochastic matrices and TPPM it is clear that $1$ is an eigenvalue of $\cT$.
\section{{Limiting behavior and classical convergence estimates}}
\subsection{Limiting behavior of $(\cT^n)_{n\geq0}$}\label{limit}
%
%
%
In this section we begin our discussion of spectral convergence bounds for semigroups. Based on a spectral decomposition of $\cT\in(\cT^n)_{n\geq0}$ we introduce a map $\cT_\infty$ and show that this map reflects the behavior of $(\cT^n)_{n\geq0}$ for large $n$. 
In the following we extend known spectral convergence bounds for classical Markov chains to the more general semigroup setup. We consider the classical derivations based on the Jordan and Schur decomposition (Section~\ref{schurjordan}). For this reason in this section we assume that $\cV$ carries a scalar product.


Our main result Theorem~\ref{result} will later outperform the bounds proven in this section in terms of convergence speed even in the context of classical Markov chains. Moreover, the techniques introduced there will allow us to consider general norms, which are not induced by a scalar product.

To formalize our intuition that the spectrum of $\cT$ determines the convergence properties of $(\cT^n)_{n\geq0}$ let us consider a Jordan decomposition of $\cT$,
\begin{eqnarray}
\cT&=&\sum_i\left({\lambda_i}\cP_i+\cN_i\right), \quad\mbox{with}\\
&& \cN_i\cP_i=\cP_i\cN_i=\cN_i,\ \ \cP_i\cP_j=\delta_{i,j}\cP_i\ \forall i,j.\label{eq:Jordanrelations}
\end{eqnarray}
Here, the summation is taken over all distinct eigenvalues of $\cT$, the $\cP_i$ are projectors whose rank equals the algebraic multiplicity of $\lambda_i$ and  the $\cN_i$ are the corresponding nilpotent blocks.
All contributions to $\cT^n$ that stem from eigenvalues of $\cT$ with magnitude smaller than $1$ will vanish with increasing $n$. Hence, we expect the image of $\cT^n$ to converge to a subspace of $\cV$ spanned by all eigenvectors of $\cT$ whose eigenvalues are of magnitude one. We therefore define the linear map $\cT_\infty$ whose range is this subspace by
\begin{align}
\cT_\infty:=\sum_{\abs{\lambda_i}=1}{\lambda_i}\cP_i,\label{Tinfty}
\end{align}
where the $\cP_i$ are spectral projectors corresponding to the eigenvalues of $\cT$ of magnitude $1$.
In cases where the spectral radius of $\cT$ is strictly smaller than $1$, $\cT_\infty$ is simply zero.
If $\cT$ has only one eigenvalue of magnitude one and this eigenvalue is equals $1$, then the sequence $\cT^n$ converges to $\cT_\infty$, which is the unique rank one projection onto the stationary eigenspace of $\cT$.
In the following lemma we shall prove that $\cT_\infty$ mirrors the limit behavior of $(\cT^n)_{n\geq0}$ also in the more general case. More precisely,  as $n$ is increasing $\Norm{\cT^n-\cT_\infty^n}{}$ approaches $0$ and, for every $k\in\mathbb{N}$, $\cT_\infty^k$ indeed is an accumulation point of $(\cT^n)_{n\geq0}$. %
The latter assertion is relevant especially in the case of classical and quantum Markov chains: the set of stochastic matrices (or quantum channels) constitutes a closed set in the corresponding space, which implies that $\cT_\infty$ is again a bona fide stochastic matrix (or quantum channel).
\begin{lemma}[Limiting behavior of $\cT^n$]\label{Tunendlich} Let $(\cT^n)_{n\geq0}$ be a semigroup within $\linops{\cV}$ such that $\Norm{\cT^n}{}\leq C\ \forall n\in\mathbb{N}$ and let $\cT_\infty$ be as in Equation~\eqref{Tinfty}. Then we have that\\
i) all eigenvalues of $\cT$ with magnitude $1$ have trivial Jordan blocks (i.e., $\abs{\lambda_i}=1 \Rightarrow\cN_i=0$),\\
ii) $(\cT^n-\cT_\infty^n)=(\cT-\cT_\infty)^n\ \forall\ n\in\mathbb{N}\backslash\{0\}$,\\
iii) $\lim_{n\rightarrow\infty}\Norm{\cT^n-\cT_\infty^n}{}=0$,\\
iv)  for any $k\in\mathbb{N}$, $\cT_\infty^k$ is contained in the closure of $(\cT^n)_{n\geq0}$ in $\linops{\cV}$,\\
v) $\Norm{\cT_\infty^k}{}\leq C\ \forall k\in\mathbb{N}$.
\end{lemma}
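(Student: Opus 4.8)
The plan is to establish the five items essentially in the order stated, since the later ones rely on the earlier ones. First I would prove (i): work in the Jordan decomposition \eqref{eq:Jordanrelations}. If some eigenvalue $\lambda_i$ with $\abs{\lambda_i}=1$ had a nontrivial nilpotent block $\cN_i\neq 0$, then because the $\cP_i$ are orthogonal idempotents and $\cN_i\cP_i=\cP_i\cN_i=\cN_i$, the binomial expansion gives $\cT^n \cP_i = \sum_{k}\binom{n}{k}\lambda_i^{n-k}\cN_i^k$, and restricting attention to this block shows $\Norm{\cT^n\cP_i}{}$ grows at least linearly in $n$ (the $k=1$ term has norm $\sim n\,\Norm{\cN_i}{}$ and the higher terms cannot cancel it since $\cN_i$ is nilpotent — one can pick a vector killed by $\cN_i^2$ but not $\cN_i$). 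This contradicts $\Norm{\cT^n}{}\leq C$. (This is the cleanest route; alternatively one can invoke Gelfand directly on the compression to the generalized eigenspace.)

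Given (i), item (ii) is a short algebraic computation: from \eqref{Tinfty} and (i), $\cT_\infty = \sum_{\abs{\lambda_i}=1}\lambda_i\cP_i$ with all relevant $\cN_i=0$, so $\cT-\cT_\infty = \sum_{\abs{\lambda_i}<1}(\lambda_i\cP_i+\cN_i)$. Using the orthogonality relations $\cP_i\cP_j=\delta_{ij}\cP_i$ (and $\cN_i\cP_j = \delta_{ij}\cN_i$), the maps $\cT_\infty$ and $\cT-\cT_\infty$ are supported on complementary spectral subspaces and multiply to zero; hence $\cT^n = ((\cT-\cT_\infty)+\cT_\infty)^n = (\cT-\cT_\infty)^n + \cT_\infty^n$ for $n\geq 1$, which rearranges to (ii). For (iii): by (ii) it suffices to show $\Norm{(\cT-\cT_\infty)^n}{}\to 0$, and $\cT-\cT_\infty$ has spectral radius equal to the largest $\abs{\lambda_i}$ strictly below $1$, call it $\mu<1$; Gelfand's formula $\mu = \lim_k \Norm{(\cT-\cT_\infty)^k}{1/k}$ then forces $\Norm{(\cT-\cT_\infty)^n}{}\to 0$ (indeed it decays like $\mu^n$ up to polynomial factors, but we only need it to vanish here).

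For (iv) I would use a pigeonhole/recurrence argument on the ``phase'' part of the semigroup. Since all magnitude-one eigenvalues have trivial Jordan blocks, $\cT_\infty^n = \sum_{\abs{\lambda_i}=1}\lambda_i^n\cP_i$, and the sequence $n\mapsto(\lambda_i^n)_i$ lives in a compact torus; hence for any $k$ there is a sequence $n_j\to\infty$ with $n_j\equiv k$ in the sense that $\lambda_i^{n_j}\to\lambda_i^k$ for every magnitude-one $\lambda_i$ (e.g.\ take $n_j = k + j\cdot p$ along a subsequence where the fractional parts of the arguments return near the origin, or simply invoke that $\{(\lambda_i^n)_i : n\}$ has $(\lambda_i^k)_i$ as an accumulation point by compactness together with the semigroup property). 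Along such $n_j$, $\cT_\infty^{n_j}\to\cT_\infty^k$, and combining with (iii), $\Norm{\cT^{n_j}-\cT_\infty^k}{}\leq\Norm{\cT^{n_j}-\cT_\infty^{n_j}}{}+\Norm{\cT_\infty^{n_j}-\cT_\infty^k}{}\to 0$, so $\cT_\infty^k$ lies in the closure of $(\cT^n)_{n\geq0}$. Finally (v) is immediate: the closure of a set bounded by $C$ is bounded by $C$ (norm is continuous), so by (iv), $\Norm{\cT_\infty^k}{}\leq C$ for all $k$.

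I expect the main obstacle to be the recurrence argument in (iv): making precise that $\cT_\infty^k$ is an accumulation point of $(\cT_\infty^n)_n$ requires a simultaneous-recurrence statement for the finitely many phases $\lambda_i^n$ on the unit circle. The clean way is to note $(\cT_\infty^n)_{n\geq 1}$ is a bounded sequence in a finite-dimensional space, hence has a convergent subsequence $\cT_\infty^{n_j}\to S$; then $S$ is itself a limit point, and using $\cT_\infty^{n_j - n_{j'}}$ for $j>j'$ (valid once $n_j-n_{j'}\geq 1$, which holds along the subsequence) together with continuity one shows the limit points form a group containing $\cT_\infty$, whence they contain $\cT_\infty^k$ for all $k\geq 1$; the case $k=0$ (identity on the magnitude-one subspace) follows since the identity is in the closure of that group. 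Everything else is routine linear algebra plus Gelfand's formula, already cited in the excerpt.
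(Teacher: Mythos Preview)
Your proof is correct and mirrors the paper's structure: items (i)--(iii) and (v) match the paper's arguments almost verbatim (Jordan-block growth for (i), the identity $\cT\cT_\infty=\cT_\infty\cT=\cT_\infty^2$ and induction for (ii), Gelfand's formula for (iii), continuity of the norm for (v)). The one genuine difference is in (iv): the paper obtains the subsequence $n_l$ with $\abs{\lambda_i^{n_l}-\lambda_i^k}\leq\epsilon$ for all magnitude-one $\lambda_i$ simultaneously by a direct citation of Dirichlet's theorem on simultaneous Diophantine approximation, whereas you argue topologically that the closure of $\{(\lambda_i^n)_i:n\geq1\}$ in the torus is a closed subsemigroup of a compact group, hence a group, and therefore contains the identity as an accumulation point (which then shifts to $(\lambda_i^k)_i$). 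Both routes are standard and valid; the paper's is a one-line citation, while yours is self-contained but needs the small lemma that a closed subsemigroup of a compact group is a subgroup (and the observation that the identity is then a genuine accumulation point along $n_j\to\infty$, which you should state explicitly---it follows by splitting into the cases where all $\lambda_i$ share a common period versus not).
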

\begin{proof}
i) We proceed by contradiction and consider $\Norm{\cT^n}{\infty}$. Since $(\cT^n)_{n\geq0}$ is bounded and in finite dimensions all norms are equivalent there is $0<K_1<\infty$ with $\Norm{\cT^n}{\infty}\leq K_1$. On the other hand there is $K_2>0$ with $\Norm{\cT^n}{\infty}\geq K_2\Norm{\sum_j (\lambda_j \cP_j+\cN_j)^n}{\infty}$. If $\lambda_i$ has a non-trivial Jordan block the latter can be lower bounded by $\Norm{\sum_j (\lambda_j \cP_j+\cN_j)^n}{\infty}\geq\abs{\lambda_i}^{n-1} n$. It follows that if $\abs{\lambda_i}=1$ and $\lambda_i$ has a non-trivial Jordan block then $\Norm{\cT^n}{\infty}$ grows unboundedly with $n$.\footnote{See also the derivation of the lower bound in Theorem~\ref{jordan}.}\\
ii) follows from the relations in (\ref{eq:Jordanrelations}) since $\cT\cT_\infty=\cT_\infty\cT=\cT_\infty^2$. For $n>2$ the statement follows by induction.\\
iii) 
By the previous assertion $\Norm{T^n-T_\infty^n}{}=\Norm{(T-T_\infty)^n}{}$ holds. The spectral radius $\mu$ of the map $\cT-\cT_\infty$ is strictly smaller than $1$. We have from Gelfand\rq{}s formula that $\lim_{n\rightarrow\infty}\Norm{T^n-T_\infty^n}{}^{1/n}=\mu<1$ and hence for all $n$ sufficiently large $\Norm{T^n-T_\infty^n}{}\leq\left(\frac{1+\mu}{2}\right)^n$. With increasing $n$ the right hand side goes to $0$ and the claim follows.\\
iv) We prove that for fixed $k$ there is a subsequence $(T^{n_l})_l$ that converges to $\cT_\infty^k$, that is $\lim_{l\rightarrow\infty}\Norm{\cT^{n_l}-\cT_\infty^k}{}=0$. To achieve this we subdivide $\cV$ into the invariant subspace of $\cT$ corresponding to all eigenvalues of magnitude $1$ and its complement. On the latter subspace we can directly invoke iii). On the former subspace, it is sufficient to find, for any $\epsilon>0$, a subsequence $(T^{n_l})_l$ with the property that $\abs{\lambda_i^{n_l}-\lambda_i^k}\leq\epsilon$ simultaneously for all $i$ with $\abs{\lambda_i}=1$. The existence of such a subsequence follows from Dirichlet\rq{}s Theorem on simultaneous Diophantine approximation, \cite{diophantine} Theorem 1B.\\
v) By iv) for any $k\in\mathbb{N}$ and any $\epsilon>0$ there is $n$ such that $\Norm{\cT^n-\cT_\infty^k}{}\leq\epsilon$. This implies that $\Norm{\cT_\infty^k}{}\leq C+\epsilon\ \forall\epsilon>0$ and hence the claimed inequality.
\end{proof}

\subsection{Jordan and Schur convergence estimates}\label{schurjordan}

Our next aim is to understand qualitatively by how much for certain $n$ the evolution $\cT^n$ differs from its limit behavior, i.e., how small the quantity $\Norm{\cT^n-\cT_\infty^n}{\infty}=\Norm{T^n-T_\infty^n}{\infty}$ is for any bounded semigroup. We shortly review two standard methods to obtain such estimates. Both methods rely on the fact that $\cT^n-\cT_\infty^n=(\cT-\cT_\infty)^n$ and perform a transformation of $T-T_\infty$ to upper triangular form. While the first approach is to choose the Jordan normal form for $T-T_\infty$, the second one is based on the Schur decomposition. Both  decompositions involve a similarity transformation $A$ that brings $T-T_\infty$ to upper triangular form, i.e., $T-T_\infty=A(\Lambda+N)A^{-1}$ with diagonal $\Lambda$ and nilpotent $N$. While in case of Jordan decomposition $\Lambda+N$ has Jordan block structure, for the Schur decomposition $A$ is unitary.
\begin{theorem}\label{jordan}
Let $(\cT^n)_{n\geq0}$ be a bounded semigroup in $\linops{\cV}$, let $\cT_\infty$ be the map introduced in \eqref{Tinfty} and let $\mu$ be the spectral radius of $\cT-\cT_\infty$. Then there are constants $C_1, C_2>0$ such that, for all $n\geq1$,
\begin{align*}
C_1\mu^{n-d_\mu+1} n^{d_\mu-1}\leq\Norm{\cT^n-\cT_\infty^n}{\infty}\leq C_2\mu^{n-d_\mu+1} n^{d_\mu-1},
\end{align*}
where $d_\mu$ is the size of the largest Jordan block corresponding to any eigenvalue of $\cT-\cT_\infty$ of magnitude $\mu$. 
\end{theorem}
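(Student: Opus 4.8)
The plan is to work directly with the map $\cS := \cT - \cT_\infty$, using the identity $\cT^n - \cT_\infty^n = \cS^n$ from Lemma~\ref{Tunendlich}~ii), so that the whole statement becomes a purely linear-algebraic claim about powers of a fixed operator whose spectral radius is $\mu < 1$. Since all norms on $\linops{\cV}$ are equivalent in finite dimension, it suffices to estimate $\Norm{\cS^n}{\infty}$ up to $n$-independent constants. I would then fix a Jordan decomposition $\cS = A(\Lambda + N)A^{-1}$ and pass to $\Norm{(\Lambda+N)^n}{\infty}$, again paying only an $n$-independent price (the condition number of $A$) which gets absorbed into $C_1, C_2$; this is legitimate here precisely because we do \emph{not} claim the constants are independent of $\cT$.

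For the \textbf{upper bound}: write $(\Lambda+N)^n$ blockwise. On a single Jordan block of size $s$ with eigenvalue $\lambda$, $|\lambda|\le\mu$, the block power has entries that are binomial coefficients $\binom{n}{k}\lambda^{n-k}$ for $0 \le k \le \min(s-1,n)$. Each such entry is bounded in magnitude by $n^{s-1}\mu^{n-s+1}$ (for $n \ge s$; the finitely many small $n$ are harmless after adjusting constants, and one uses $|\lambda|^{n-k} \le \mu^{n-s+1}$ since $|\lambda|\le\mu\le1$ and $k\le s-1$). Summing the (finitely many) blocks and using equivalence of matrix norms gives $\Norm{(\Lambda+N)^n}{\infty} \le C_2' \, n^{d_\mu - 1}\mu^{n - d_\mu + 1}$, where $d_\mu$ is the largest Jordan block size among eigenvalues of modulus exactly $\mu$ — blocks with $|\lambda| < \mu$ contribute a strictly smaller exponential and are dominated, and blocks with $|\lambda| = \mu$ but smaller size contribute a smaller power of $n$.

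For the \textbf{lower bound}: it is enough to exhibit one matrix entry (equivalently, to test against one pair of basis vectors, then invoke norm equivalence again) that is as large as claimed. Pick a Jordan block of size exactly $d_\mu$ attached to an eigenvalue $\lambda$ with $|\lambda| = \mu$; its $n$-th power has a corner entry $\binom{n}{d_\mu - 1}\lambda^{\,n - d_\mu + 1}$, whose absolute value is $\binom{n}{d_\mu-1}\mu^{\,n-d_\mu+1} \ge C_1' \, n^{d_\mu - 1}\mu^{\,n - d_\mu + 1}$ for all $n \ge d_\mu$ (and again the finitely many smaller $n$ are absorbed into the constant, noting $\Norm{\cS^n}{\infty} > 0$ for every $n\ge1$ since $\mu>0$ forces $\cS$ to be non-nilpotent). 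Transporting this back through $A$ and the norm equivalence yields the claimed $C_1$. One cosmetic point: this footnoted-to argument is essentially the same computation used in Lemma~\ref{Tunendlich}~i) to show non-trivial Jordan blocks at modulus $1$ are incompatible with boundedness, so the exposition can cross-reference it.

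The main obstacle is bookkeeping rather than conceptual: one must carefully separate the three regimes (blocks with $|\lambda|<\mu$; blocks with $|\lambda|=\mu$ of size $<d_\mu$; blocks with $|\lambda|=\mu$ of size $d_\mu$) and verify that the first two are genuinely dominated by the third for all large $n$, \emph{and} that shifting the estimates from "all large $n$" to "all $n\ge1$" only costs constants — which works because there are finitely many exceptional $n$ and $\Norm{\cS^n}{\infty}$ is strictly positive on each of them. A minor subtlety worth stating explicitly is the degenerate case $\mu = 0$ (then $\cS$ is nilpotent, $\cS^n = 0$ for $n$ past the nilpotency index, and the inequality holds trivially with, say, $d_\mu$ interpreted as the nilpotency index or the statement restricted to $\mu>0$); the interesting content is entirely in $0 < \mu < 1$.
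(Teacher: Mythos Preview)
Your proposal is correct and follows essentially the same approach as the paper: both pass to $\cS=\cT-\cT_\infty$, fix a Jordan decomposition, absorb the ($\cT$-dependent) condition number of the similarity into the constants, and then read off both bounds from the explicit entries $\binom{n}{k}\lambda^{n-k}$ of powers of a single Jordan block, with the lower bound coming from the corner entry of a maximal block at modulus $\mu$ and the finitely many small $n$ handled by rescaling. Your write-up is slightly more explicit than the paper's about separating the three block regimes and about the degenerate case $\mu=0$, but the underlying argument is the same.
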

\begin{proof}
We first state an upper bound on $\Norm{(\Lambda+N)^n}{\infty}$ with diagonal $\Lambda$ and nilpotent upper-triangular $N$. We note that any monomial in $N$ and $\Lambda$ vanishes if the total degree of  $N$ is larger than or equal to $j D-1$. Using this together with the triangle inequality in the binomial expansion and exploiting the sub-multiplicativity of the $\infty$-norm we find  
\begin{align}\label{binary}
\Norm{(\Lambda+N)^n}{\infty}\leq\sum_{k=0}^{\min{\{n,D-1}\}}
{n\choose k}
\Norm{N}{\infty}^k\Norm{\Lambda}{\infty}^{n-k}.
\end{align}

Let now $J(\lambda_i)$ be a Jordan block with diagonal part $\lambda_i\id$ and nilpotent part $N_i$. We consider the Jordan decomposition $T-T_\infty=A\left(\bigoplus_{i,\nu}J_\nu(\lambda_i)\right)A^{-1}$, where the summation goes over $i$, which labels the different eigenvalues of $T-T_\infty$, and over $\nu$, which enumerates the Jordan blocks corresponding to an eigenvalue $\lambda_i$. We introduce the constant $\kappa=\inf{\left(\Norm{A}{\infty}\Norm{A^{-1}}{\infty}\right)}$ where the infimum is taken over all $A$ that bring $T-T_\infty$ to Jordan form.
It follows readily that
\begin{align}\label{binary1}
\kappa^{-1}\Norm{J^n}{\infty}\leq\Norm{T^n-T^n_\infty}{\infty}\leq\kappa\Norm{J^n}{\infty}
\end{align}
with $J=\bigoplus_{i,\nu}J_\nu(\lambda_i)$. For any $J_\nu(\lambda_i)$ there is an $n_0$ such that for all $n\geq n_0$ one has $\Norm{J_\nu(\lambda_i)^n}{\infty}\leq\Norm{J_{max}(\lambda_{max})^n}{\infty}$,
where $J_{max}(\lambda_{max})$ denotes the largest Jordan block corresponding to an eigenvalue $\lambda_{max}$ of modulus $\mu$. Therefore, to find an upper bound on the right hand side of  \eqref{binary1} we can subdivide $J_{max}(\lambda_{max})$ in a nilpotent and a diagonal part and use Inequality~\eqref{binary}. We note that for $k\leq d_\mu-1$ we can bound ${n\choose k}\leq n^{d_\mu-1}$ and taking everything together we obtain for large enough $n$
\begin{align*}
\Norm{T^n-T^n_\infty}{\infty}\leq\kappa\sum_{k=0}^{d_\mu-1}n^{d_\mu-1}
\mu^{n-d_\mu+1},
\end{align*}
which proves the upper bound in Theorem~\ref{jordan} since it can be extended to an upper bound valid for any $n\in\mathbb{N}$ by a rescaling of $C_1$. 
The lower bound is a consequence of the following inequalities for $n\geq d_\mu-1$
\begin{align*}
\mu^{n-d_\mu+1}
{n\choose d_\mu-1}\leq\Norm{J_{max}(\lambda_{max})^n}{\infty}\leq\Norm{J^n}{\infty}.
\end{align*}
\end{proof}
One problem with the above proof is that $n_0$ and thus $C_2$ can get large if there is a sub-dominant eigenvalue close to the spectral radius. Another issue is that one cannot a priori bound $\kappa$ for general $T$. Consequently, only little is known about $C_1$ and $C_2$. Most awkward, $C_1$ and $C_2$ depend on the given channel $\cT$, i.e. are not universal for all channels of a given dimension. For this reason Theorem~\ref{jordan} is a qualitative statement about the asymptotic behavior of the semigroup. In contrast, the Schur decomposition allows us to state an upper bound on the rate of convergence that only depends on $n$, $D$ and $\mu$. This goes at the price of a rather pessimistic estimate.

\begin{theorem}\label{schur}
Let $(\cT^n)_{n\geq0}$ be a bounded semigroup in $\linops{\cV}$ such that $\norm{\cT^n}{\infty}\leq C\ \forall n\in\mathbb{N}$ and let $\mu$ be the spectral radius of $\cT-\cT_\infty$. For any $n\in\mathbb{N}$ it holds that
\begin{align*}
\Norm{\cT^n-\cT_\infty^n}{\infty}\leq2\mu^{n-D+1}n^{D-1}(\mu+2C)^{D-1}.
\end{align*}
\end{theorem}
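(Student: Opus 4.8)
The plan is to mimic the Jordan-based argument of Theorem~\ref{jordan} but to replace the Jordan decomposition of $T-T_\infty$ with its Schur decomposition, so that the similarity transformation is unitary and hence contributes nothing to the $\infty$-norm. Since $\cT^n-\cT_\infty^n=(\cT-\cT_\infty)^n$ by part~ii) of Lemma~\ref{Tunendlich}, and $\mu<1$ is the spectral radius of $\cT-\cT_\infty$ by part~iii), I would write $T-T_\infty=U(\Lambda+N)U^*$ with $U$ unitary, $\Lambda$ diagonal carrying the eigenvalues of $T-T_\infty$ (all of magnitude $\le\mu$), and $N$ strictly upper triangular. Then $\Norm{\cT^n-\cT_\infty^n}{\infty}=\Norm{(\Lambda+N)^n}{\infty}$, and the whole problem reduces to bounding $\Norm{(\Lambda+N)^n}{\infty}$ in terms of $n$, $D$, $\mu$ and the size of $N$.

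The second ingredient is the binomial-type estimate \eqref{binary}: since $N$ is nilpotent of order at most $D-1$ and $\Lambda$ is diagonal,
\[
\Norm{(\Lambda+N)^n}{\infty}\leq\sum_{k=0}^{\min\{n,D-1\}}\binom{n}{k}\Norm{N}{\infty}^k\Norm{\Lambda}{\infty}^{n-k}.
\]
Here $\Norm{\Lambda}{\infty}=\mu$. The remaining task is to control $\Norm{N}{\infty}$ universally. The key point is that $N$ is the strictly-upper-triangular part of the Schur form of $T-T_\infty$, so $\Norm{N}{\infty}\le\Norm{T-T_\infty}{\infty}+\Norm{\Lambda}{\infty}\le\Norm{T-T_\infty}{\infty}+\mu$, and $\Norm{T-T_\infty}{\infty}\le\Norm{T}{\infty}+\Norm{T_\infty}{\infty}\le 2C$ using boundedness of the semigroup together with part~v) of Lemma~\ref{Tunendlich}. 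Hence $\Norm{N}{\infty}\le\mu+2C$. Plugging this in, bounding each $\binom{n}{k}\le n^{D-1}$ for $0\le k\le D-1$, factoring out $\mu^{n-D+1}$, and absorbing the $\le D-1$ powers of $\Norm{N}{\infty}$ that can exceed $1$ into the factor $(\mu+2C)^{D-1}$ while the overall count of summands gives the leading constant $2$, yields
\[
\Norm{\cT^n-\cT_\infty^n}{\infty}\leq 2\,\mu^{n-D+1}n^{D-1}(\mu+2C)^{D-1},
\]
as claimed. One should double-check the edge cases $n<D-1$ (where $\mu^{n-D+1}$ has a negative exponent but the bound still holds because the true norm is at most $C$ and the stated right-hand side is then large) and $\mu=0$ (where $\cT_\infty=\cT$ outside a nilpotent part and the estimate degenerates gracefully).

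I expect the main obstacle to be purely bookkeeping: being careful that the crude bound $\binom{n}{k}\le n^{D-1}$ together with the sum over $k$ produces exactly the advertised constant $2$ rather than something like $D$, and correctly splitting the powers $\Norm{N}{\infty}^k\Norm{\Lambda}{\infty}^{n-k}$ so that after pulling out $\mu^{n-D+1}$ one is left with at most $D-1$ factors of the larger of $\mu$ and $\Norm{N}{\infty}$, i.e. with $(\mu+2C)^{D-1}$. There is no conceptual difficulty once one commits to the Schur form; the entire content is that unitarity of $U$ eliminates the uncontrollable condition number $\kappa$ that plagued Theorem~\ref{jordan}, at the cost of replacing the sharp Jordan-block size $d_\mu$ by the dimension $D$ and inflating the constant to depend on $C$.
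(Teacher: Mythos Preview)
Your approach is exactly the one the paper takes: Schur-decompose $T-T_\infty=U(\Lambda+N)U^*$, use the binomial estimate \eqref{binary} with $\Norm{\Lambda}{\infty}=\mu$, and bound $\Norm{N}{\infty}\le\Norm{T-T_\infty}{\infty}+\Norm{\Lambda}{\infty}\le 2C+\mu$. The only place your sketch is slightly off is the origin of the constant $2$: it does \emph{not} come from the count of summands (that would give $D$, as you yourself worry), but from bounding $\binom{n}{k}\le n^{k}$ rather than $n^{D-1}$ and then summing the geometric series $\sum_{k=0}^{D-1}n^{k}\le 2n^{D-1}$ for $n\ge 2$; the remaining powers $\mu^{D-1-k}\le 1$ and $\Norm{N}{\infty}^{k}\le\max(1,\Norm{N}{\infty}^{D-1})\le(\mu+2C)^{D-1}$ (using $C\ge 1$) then combine exactly as in your outline.
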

\begin{proof}
As already mentioned, this will be proven based on the Schur decomposition $T-T_\infty=U(\Lambda+N)U^\dagger$, where $U$ is unitary. As before we can rely on the binomial expansion Inequality~\eqref{binary}. We note that $\Norm{U}{\infty}=1$ and that for $n>1$
\begin{align*}
\sum_{k=0}^{D-1}{n \choose k}\leq\sum_{k=0}^{D-1}n^k\leq2n^{D-1}.
\end{align*}
Thus, using the sub multiplicativity of the $\infty$-norm it follows from \eqref{binary} that
\begin{align*}
\Norm{T^n-T^n_\infty}{\infty}
\leq2n^{D-1}\mu^{n-D+1}\max{(1,\Norm{N}{\infty}^{D-1})}.
\end{align*}
In addition we have that $N=T-T_\infty-\Lambda$ and therefore $\Norm{N}{\infty}\leq2C+\mu$.
\end{proof}
To obtain a convergence estimate for Markov chains in 1-to-1 norm we can rely on the Inequalities~\eqref{clequival}. The corresponding statement of Theorem~\ref{jordan} is immediate. Analogously, Theorem~\ref{jordan} can be used to estimate the speed of convergence of TPPMs in 1-to-1 and diamond norm via the Inequalities~\eqref{equival},~\eqref{diamond1to1}.

Due to the lower bound in \eqref{clequival} the singular values of stochastic matrices are bounded by $D^{1/2}$ from which we infer that $C\leq D^{1/2}$ in this case. For positive, trace preserving maps the singular values are bounded by $D^{1/4}$ (\cite{RuWo}, or by the norm equivalence \eqref{equival} and the fact that $\Norm{\cT}{1\rightarrow1}=1$). Thus, Theorem~\ref{schur} includes a convergence bound for both classical stochastic matrices and TPPMs. For a more detailed discussion of the resulting estimates in the quantum context see Subsection~\ref{discussion}.
\section{Main result: Spectrum and Convergence}\label{secmain}
The main contribution of this article is to introduce a new formalism that yields spectral bounds on norms of functions of transition maps of Markov processes and to apply this formalism to prove new estimates for the convergence of a such  processes to stationarity.
The core technical innovation will be to employ a Wiener algebra functional calculus in the context of bounded semigroups. To prove our estimates we will rely on the theory of function algebras, functional calculi and model spaces. To our knowledge these concepts have not found their way into the theory of classical or quantum Markov processes so far. For this reason at first we briefly introduce the mathematical framework in Subsection~\ref{ff}. A detailed introduction to the mathematics involved goes beyond the scope of this article and we refer to \cite{TS,OF,AC} for this. In Subsection~\ref{semi} we employ the mathematical machinery to the context of bounded semigroups and derive the main theorem. In the subsequent subsection we discuss our main result and compare it to the convergence estimates from Jordan and Schur decompositions.

\subsection{Function spaces and functional calculi}\label{ff}
In this subsection we discuss the problem of bounding the norm of a function of an operator in terms of the spectrum of this operator. To start with, we introduce the classes of functions and operators that we study and recall the notion of a bounded functional calculus. In the mathematical literature the problem of constructing good functional calculi for given classes of operators is studied extensively \cite{Nagy,TS,OF}. The boundedness of the functional calculus implies that the norm of a function of the operator is bounded in terms of the norm of the function. A core innovation taken from~\cite{AC} is then to relate the problem of finding a good spectral bound to a Nevanlinna-Pick interpolation problem in the corresponding class of functions.

We begin by defining the function spaces, which will be relevant in our discussion. The space of analytic functions on the open unit disc $\mathbb{D}=\{z\in\mathbb{C}|\abs{z}<1\}$ is denoted by $Hol(\mathbb{D})$. We will be concerned with certain subspaces of $Hol(\mathbb{D})$, an important class of which constitute the Hardy spaces. For $p>0$ those are defined as
\begin{align*}
H^p:=\big\{f\in Hol(\mathbb{D})| \Norm{f}{H^p}^p:=\sup_{0\leq r<1}\frac{1}{2\pi}\int_0^{2\pi}\abs{f(re^{i\phi})}^p\d{\phi}<\infty\big\},
\end{align*}
and 
\begin{align*}
&H^\infty:=\big\{f\in Hol(\mathbb{D})| \Norm{f}{H^\infty}:=\sup_{z\in\mathbb{D}}\abs{f(z)}<\infty\big\}.
\end{align*}
It is immediate from the definition that the spaces $H^p$ are vector spaces, that the mapping $f\mapsto\Norm{f}{H^p}$ is a norm for $p\geq1$ and that $H^p\subset H^q$ for $p\geq q$. In the special case $p=2$ the Hardy norm can be written using the Taylor coefficients of the analytic function $f$. More precisely, we write $f(z)=\sum_{k\geq0}\hat{f}(k)z^k$ and use Parseval\rq{}s identity to conclude that
\begin{align*}
\sup_{0\leq r<1}\frac{1}{2\pi}\int_0^{2\pi}\abs{f(re^{i\phi})}^2\d{\phi}=\sum_{k\geq0}\abs{\hat{f}(k)}^2.
\end{align*}
Thus, $f\in Hol(\mathbb{D})$ is in $H^2$ if and only if $\sum_{k\geq0}\abs{\hat{f}(k)}^2<\infty$ (see \cite{OF}, p. 32). The Wiener algebra is defined as the subset of $Hol(\mathbb{D})$ of absolutely convergent Taylor series,
\begin{align*}
W:=\{f=\sum_{k\geq0}\hat{f}(k)z^k|\sum_{k\geq0}\abs{\hat{f}(k)}<\infty\}.
\end{align*}

For a given class of operators (for instance Hilbert space contractions or power bounded operators) the associated function algebra is a space of analytic functions that  mirrors the ``boundedness properties\rq\rq{} of those operators. A functional calculus is a map that associates operators from the given class and elements of the function algebra and relates the norms of an operator and its representative in the function algebra.
More precisely we have the following definitions~\cite{AC}:
\begin{definition}[Function algebra]\label{functionalalgebra} A unital Banach algebra $A$ with elements in $Hol(\mathbb{D})$ will be called a function algebra, if 
\begin{enumerate}
\item $A$ contains all polynomials and $\lim_{n\rightarrow\infty}\Norm{z^n}{A}^{1/n}=1$.
\item $(a\in A,\:\lambda\in\mathbb{D},\: a(\lambda)=0)\Rightarrow\frac{a}{z-\lambda}\in A$.
\end{enumerate}
\end{definition}
\begin{definition}[Functional calculus]
Let $X: \cB\rightarrow \cB$ be an operator on a Banach space $\cB$. A bounded algebra homomorphism from a function algebra $A$ into the set of linear operators on $\cB$,
\begin{align*}
\cJ_X:\ A\rightarrow L(\cB),
\end{align*}
will be called a functional calculus for $X$, if it satisfies $\cJ_X(z)=X$ and $\cJ_X(1)=\id$.
\end{definition}
(In our case it is sufficient to assume that $\cB$ has finite dimension.)
Intuitively $\cJ_X$ captures the notion of ``plugging an operator into a function\rq\rq{}, that is for $a\in A$ we have $a(X)=\cJ_X(a)$ and by the boundedness property there is a constant $C_X$ such that
\begin{align*}
\Norm{a(X)}{}\leq C_X\Norm{a}{A}.
\end{align*}
Given a family $\Gamma$ of operators we say that this family obeys a functional calculus with constant $C$ if each $X\in\Gamma$ admits a functional calculus with $C_X\leq C$. Thus, one approach to the problem of bounding the norm $\Norm{a(X)}{}$ for $X\in\Gamma$ is by constructing a functional calculus for the family $\Gamma$ and then bounding the norm of $a$ in the function algebra.
For us, two instances of functional calculi will be important. In the first example we consider power-bounded Banach spaces operators, while the second one treats Hilbert space contractions.\\
i) Consider a family $\Gamma=\{X\in L(\cB)| \Norm{X^n}{}\leq C\ \forall n\in\mathbb{N}\}$ of Banach space operators that are power bounded by some constant $C$. This family admits a Wiener algebra functional calculus since for any $f\in W$ and $X\in\Gamma$
\begin{align}
\Norm{f(X)}{}=\Norm{\sum_{k\geq0}\hat{f}(k)X^k}{}\leq \sum_{k\geq0}\abs{\hat{f}(k)}\Norm{X^k}{}\leq C\sum_{k\geq0}\abs{\hat{f}(k)}=C\Norm{f}{W}\label{calc}
\end{align}
holds.\\
ii) In Section~\ref{contractive} we discuss the semigroup of Hilbert space contractions $\Gamma=\{X\in L(\cH)| \Norm{X}{\infty}\leq1\}$. This family allows for an $H^\infty$ functional calculus (with constant $C=1$), since by von Neumann\rq{}s inequality \cite{Paulsen} we have for any $f\in H^\infty$ that has a continuous extension to the boundary of $\mathbb{D}$ and $X\in\Gamma$
\begin{align}
\Norm{f(X)}{\infty}\leq\Norm{f}{H^\infty}\label{calc1}.
\end{align}
At first glance, the outlined procedure seems to be of little use since the right hand sides of \eqref{calc},~\eqref{calc1} do not depend on $X$ anymore. To obtain a better bound one can rely on the following insight. Recall that the minimal polynomial $m_X$ annihilates the corresponding operator, i.e., $m_X(X)=0$. Instead of considering the function $a$ directly, we add multiples of $m=m_X$ (or any other annihilating polynomial) to this function and consider $c=a+mb,\:b\in A$ instead of $a$. It is immediate that $\Norm{a(X)}{}=\Norm{c(X)}{}$. The following simple but crucial lemma summarizes this point:

{\begin{lemma}[\cite{AC}, Lemma 3.1]\label{wiener} Let $m\neq0$ be a polynomial and let $\Gamma$ be a set of operators that obey an $A$ functional calculus with constant $C$ and that satisfy $m(X)=0\ \forall X\in\Gamma$. Then
%
%
\begin{align*}
\Norm{a(X)}{}\leq C \Norm{a}{A/mA},\ \forall X\in\Gamma,
\end{align*}
where $\Norm{a}{A/mA}=\inf{\{\norm{c}{A}|\ c=f+mb,\:b\in A\}}$.
\end{lemma}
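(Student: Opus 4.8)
The plan is to exploit the fact that, since $m(X)=0$ for every $X\in\Gamma$, adding any multiple of $m$ to the representative function $a$ does not change the operator $a(X)$. First I would fix $X\in\Gamma$ and an arbitrary element $c\in a+mA$, i.e.\ $c=a+mb$ for some $b\in A$. By the algebra homomorphism property of the functional calculus $\cJ_X$ together with $m(X)=0$, we get
\begin{align*}
c(X)=\cJ_X(c)=\cJ_X(a)+\cJ_X(m)\cJ_X(b)=a(X)+m(X)\,b(X)=a(X),
\end{align*}
so in particular $\Norm{a(X)}{}=\Norm{c(X)}{}$. Applying the boundedness of the functional calculus (with constant $C_X\leq C$ because $\Gamma$ obeys the $A$ functional calculus with constant $C$) to the element $c\in A$ yields $\Norm{a(X)}{}=\Norm{c(X)}{}\leq C\Norm{c}{A}$.

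Since this holds for \emph{every} choice of $c=a+mb$ with $b\in A$, I would then take the infimum over all such $c$ on the right-hand side, which by definition of the quotient-algebra seminorm gives
\begin{align*}
\Norm{a(X)}{}\leq C\inf\{\Norm{c}{A}\mid c=a+mb,\ b\in A\}=C\Norm{a}{A/mA}.
\end{align*}
Finally, since the choice of $X\in\Gamma$ was arbitrary, the bound holds for all $X\in\Gamma$, which is the claim. One should note that $a\in A$ itself (being $c$ with $b=0$) guarantees the infimum is over a nonempty set, so $\Norm{a}{A/mA}$ is well defined and finite.

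There is essentially no serious obstacle here: the whole content is the observation that the functional calculus is an algebra homomorphism killing $m(X)$, so the operator norm only sees the coset $a+mA$, and one is free to pick the cheapest representative. The only point requiring a little care is making sure the functional calculus genuinely respects products — i.e.\ $\cJ_X(mb)=\cJ_X(m)\cJ_X(b)$ — which is part of the definition of ``bounded algebra homomorphism'' and hence available; and that $A$ is closed under multiplication by the polynomial $m$, which holds because $A$ contains all polynomials (Definition~\ref{functionalalgebra}) and is an algebra, so $mb\in A$ whenever $b\in A$.
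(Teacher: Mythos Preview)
Your proof is correct and follows exactly the same approach as the paper's own argument: for any $b\in A$ one has $\Norm{a(X)}{}=\Norm{(a+mb)(X)}{}\leq C\Norm{a+mb}{A}$, and taking the infimum over $b$ gives the claim. Your version simply spells out more carefully why $(a+mb)(X)=a(X)$ via the algebra-homomorphism property and why $mb\in A$, but the content is identical.
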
}
\begin{proof} For any $b\in A$ we have that $\Norm{a(X)}{}=\Norm{(a+mb)(X)}{}\leq C \Norm{a+mb}{A}$.
\end{proof}

\subsection{Spectral bounds for the convergence of Markovian processes to stationarity}\label{semi}

Crucial for the main result Theorem~\ref{result} is that classical stochastic matrices and quantum channels both obey a power-boundedness condition. Given any norm $\Norm{\cdot}{}$ such that every $\cT\in\mathfrak{T}$ satisfies $\Norm{\cT}{}\leq C$, then for all $n\geq0$, $\Norm{\cT^n}{}\leq C$, i.e., $\cT$ generates a bounded semigroup $(\cT^n)_{n\geq0}$. In view of Lemma~\ref{wiener} this entails that $(\cT^n)_{n\geq0}$ obeys a Wiener algebra functional calculus with $\Norm{f(\cT)}{}\leq C\Norm{f}{W/mW}$. Although this observation is simple, we state it in a separate theorem to emphasize its importance.\\

\begin{theorem}\label{wienerthm} Let $(\cT^n)_{n\geq0}$ be a semigroup bounded with
constant $C$ and let $m$ be the minimal polynomial of $\cT$, $m(\cT)=0$. Then 
\begin{align*}
\Norm{f(\cT)}{}\leq C\Norm{f}{W/mW}
\end{align*}
holds for any function $f\in W$.
\end{theorem}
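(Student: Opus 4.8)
The plan is to recognize that Theorem~\ref{wienerthm} is essentially an immediate corollary of the machinery assembled in Subsection~\ref{ff}, and that the entire content is a matter of verifying that the hypotheses of Lemma~\ref{wiener} are met by the family $\Gamma = \{\cT\}$ (or, more naturally, by the family of all powers $\{\cT^n\}_{n\geq 0}$, though a single operator suffices). First I would observe that $\linops{\cV}$ is finite-dimensional, so $\cB := \cV$ is a Banach space in whatever norm we have fixed, and $\cT \in \linops{\cV}$ is a bounded operator on it. By hypothesis $\Norm{\cT^n}{} \leq C$ for all $n \in \mathbb{N}$, so $\cT$ lies in the family $\Gamma = \{X \in \linops{\cV} \mid \Norm{X^n}{} \leq C\ \forall n\}$ of power-bounded operators discussed in item i) of Subsection~\ref{ff}.

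The key step is to invoke the Wiener algebra functional calculus for this family: for any $f \in W$ with absolutely convergent Taylor series $f = \sum_{k\geq 0}\hat f(k) z^k$, the operator $f(\cT) := \sum_{k\geq 0}\hat f(k)\cT^k$ is well-defined (the series converges absolutely in norm since $\sum_k |\hat f(k)|\,\Norm{\cT^k}{} \leq C\sum_k|\hat f(k)| = C\Norm{f}{W} < \infty$), and the estimate \eqref{calc} gives $\Norm{f(\cT)}{} \leq C\Norm{f}{W}$. One must also note the formal requirements: $W$ is a unital Banach algebra of functions in $Hol(\mathbb{D})$ satisfying the two conditions of Definition~\ref{functionalalgebra} — it contains all polynomials, $\lim_{n\to\infty}\Norm{z^n}{W}^{1/n} = \lim_n 1^{1/n} = 1$, and division property (2) is a standard fact about the Wiener algebra — so it is a bona fide function algebra, and $f \mapsto f(\cT)$ is a bounded unital algebra homomorphism with $\cJ_\cT(z) = \cT$, $\cJ_\cT(1) = \id$, i.e.\ a functional calculus for $\cT$ with constant $C_\cT \leq C$.

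Then I would apply Lemma~\ref{wiener} directly with $A = W$, the set of operators being $\Gamma = \{\cT\}$, and the annihilating polynomial $m = m_\cT$ the minimal polynomial of $\cT$, which satisfies $m(\cT) = 0$ by definition. The lemma yields
\begin{align*}
\Norm{f(\cT)}{} \leq C\,\Norm{f}{W/mW}
\end{align*}
for every $f \in W$, which is precisely the claimed bound. This completes the argument.

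I do not expect any genuine obstacle here: the theorem is explicitly flagged in the text as a simple repackaging of Lemma~\ref{wiener} stated separately "to emphasize its importance." The only points requiring a sentence of care are (a) confirming convergence of the series defining $f(\cT)$ in the given (finite-dimensional, hence complete) norm, so that $f(\cT)$ makes sense and the homomorphism property holds, and (b) checking that $W$ genuinely satisfies Definition~\ref{functionalalgebra} so that Lemma~\ref{wiener} applies — both of which are routine and already implicit in Subsection~\ref{ff}. If one wished to be slightly more self-contained one could even bypass Lemma~\ref{wiener} and argue directly: for any $b \in W$, $\Norm{f(\cT)}{} = \Norm{(f + m_\cT b)(\cT)}{} \leq C\Norm{f + m_\cT b}{W}$ using $m_\cT(\cT) = 0$ and \eqref{calc}, and then take the infimum over $b$ to get $\Norm{f(\cT)}{} \leq C\inf_b \Norm{f + m_\cT b}{W} = C\Norm{f}{W/m_\cT W}$.
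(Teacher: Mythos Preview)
Your proposal is correct and matches the paper's approach exactly: the paper treats Theorem~\ref{wienerthm} as an immediate consequence of the Wiener algebra functional calculus \eqref{calc} together with Lemma~\ref{wiener}, providing no separate proof beyond the remark that ``this observation is simple.'' Your verification of the hypotheses and the optional self-contained argument at the end are precisely the content the paper leaves implicit.
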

Theorem~\ref{wienerthm} can be used to bound various functions of transition maps of Markovian evolutions. For instance one might be interested in bounding the norm of the inverse of a transition map (if it exists). In \cite{AC} an estimate of $X^{-1}$ is derived for an algebraic Banach space operator $X$ by using Lemma~\ref{wiener} and  bounding $\Norm{z^{-1}}{W/mW}$. This estimate immediately carries over to Markov transition maps. In this article we seek bounds for the rate of convergence of a semigroup; we will use Theorem~\ref{wienerthm} to relate this problem to the one of bounding $\Norm{z^n}{W/mW}$. The latter task has not yet been studied in the mathematical literature although it is deeply connected to the famous Kreiss matrix theorem. If spectral data is present then the resolvent estimate in the Kreiss matrix theorem can be extended to the interior of the unit disk and bounding $\Norm{z^n}{W/mW}$ corresponds to the task of establishing power-boundedness with given spectrum.

Based on Theorem~\ref{wienerthm} we obtain the following:

\begin{theorem}\label{result} Let $(\cT^n)_{n\geq0}$ be a semigroup bounded by $C$, and let $\cT_\infty$ be its asymptotic evolution introduced in \eqref{Tinfty}. We write $m=m_{\cT-\cT_\infty}$ for the minimal polynomial and $\mu$ for the spectral radius of $\cT-\cT_\infty$ and $B$ for the Blaschke product \eqref{bp} associated with $m$. Then, for $n>\frac{\mu}{1-\mu}$ we have
\begin{align*}
\Norm{\cT^n-\cT_\infty^n}{}\leq\mu^{n+1}\frac{4  C e^{2}\sqrt{\abs{m}}(\abs{m}+1)}{n\left(1-(1+\frac{1}{n})\mu\right)^{3/2}}\:\sup_{\abs{z}=\mu(1+1/n)}\left|\frac{1}{B(z)}\right|.
\end{align*}
\end{theorem}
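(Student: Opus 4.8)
The plan is to apply Theorem~\ref{wienerthm} to the function $f(z) = z^n$ and the operator $\cT - \cT_\infty$, which by Lemma~\ref{Tunendlich}(ii) satisfies $(\cT-\cT_\infty)^n = \cT^n - \cT_\infty^n$, and which generates a semigroup bounded by (a constant comparable to) $C$. This reduces everything to estimating the quotient norm $\Norm{z^n}{W/mW}$, where $m = m_{\cT-\cT_\infty}$. The whole problem thus becomes: find a good function $c \in W$ of the form $c = z^n + m b$ with $b \in W$ — equivalently, a function in $W$ that agrees with $z^n$ to the appropriate order at each root $\lambda_i$ of $m$ inside $\mathbb{D}$ — and bound $\Norm{c}{W}$.

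First I would choose $c$ cleverly. The natural candidate exploits the Blaschke product $B$ whose numerator is exactly $m$. Writing $z^n = B(z) \cdot g(z) + (\text{remainder})$ doesn't quite work since $B$ is not a polynomial, so instead I would build $c$ as a Cauchy-type integral: for a radius $\rho = \mu(1 + 1/n) < 1$ (legitimate precisely because $n > \mu/(1-\mu)$ forces $\rho < 1$), consider
\begin{align*}
c(z) = \frac{1}{2\pi i}\oint_{|\zeta| = \rho} \frac{\zeta^n}{B(\zeta)} \cdot \frac{B(\zeta) - B(z)\,(\text{correction})}{\zeta - z}\, d\zeta
\end{align*}
or, more cleanly, use the standard representation of the quotient-norm minimizer in $W/mW$ in terms of the reproducing structure of the model space $K_B = H^2 \ominus B H^2$: the value $\Norm{z^n}{W/mW}$ is controlled by $\Norm{P_{K_B} z^n}{}$ in a suitable sense, and one has the pointwise/contour bound $\Norm{z^n}{W/mW} \le \Norm{z^n / B}{\text{on } |z|=\rho}$ times a factor accounting for passing from the circle $|z|=\rho$ back into $W$. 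Concretely, since $\sum_k |\widehat{c}(k)|$ can be estimated from $\sup_{|z|=\rho}|c(z)|$ via $|\widehat c(k)| \le \rho^{-k}\sup_{|z|=\rho}|c(z)|$ and summing the geometric series gives a factor $\sim 1/(1-\rho)$, and since choosing $c(z) = z^n \prod_i \frac{1 - \bar\lambda_i z}{z - \lambda_i}\cdot(\text{something analytic})$ is not in $W$, the right object is $c(z) = z^n - m(z) q(z)$ where $q$ interpolates $z^n/m(z)$; I would bound $\Norm{c}{W}$ by deforming to $|z| = \rho$ and using that on that circle $|z^n| = \rho^n = \mu^n(1+1/n)^n \le \mu^n e$ while $|1/m(z)|$ is controlled via $|1/B(z)|$ up to the factor $\prod_i|1-\bar\lambda_i z|^{-1} \le (1-\rho^2)^{-|m|}$ or a milder $\sqrt{|m|}$-type bound from Cauchy–Schwarz over the $|m|$ factors.

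The bookkeeping then produces the claimed constant: the $(1+1/n)^n \le e$ and an analogous $(1+1/n)^{n}$-type term give the $e^2$; the geometric summation over Taylor coefficients with ratio $\rho$ gives the $(1-(1+1/n)\mu)^{-1}$, and a second such summation (or an estimate of the derivative / a Bernstein-type inequality needed to pass from sup-norm on the circle to $W$-norm through the $|m|+1$ interpolation nodes) upgrades the power to $3/2$ and introduces the $\sqrt{|m|}(|m|+1)$; one power of $1/n$ is extracted from $\rho^n/\mu^n$ or from an integration-by-parts gain in the contour integral, yielding the overall $\mu^{n+1}/n$ prefactor. I expect the \emph{main obstacle} to be exactly this last point — getting the sharp $n$-decay and the correct exponent $3/2$ on $(1-(1+\tfrac1n)\mu)$ — since a naive contour estimate gives only $\mu^n/(1-\rho)$ with no $1/n$ and the wrong power; the improvement requires either integrating by parts in the Cauchy integral (trading a factor $n$ for a derivative of $1/B$, hence the need to control $B'$ on $|z|=\rho$, which is where $\sqrt{|m|}$ enters) or invoking a refined estimate for $\Norm{\cdot}{W/mW}$ from the model-space theory of \cite{AC}, and verifying that all constants combine to exactly the stated bound rather than something merely of the same order.
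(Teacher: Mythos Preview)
Your high-level plan matches the paper's: apply the Wiener calculus to $\cT-\cT_\infty$ (bounded by $2C$ via Lemma~\ref{Tunendlich}), reduce to $\Norm{z^n}{W/mW}$, pass to a contour integral on the circle $|z|=\rho=\mu(1+1/n)$, and integrate by parts to extract the $1/n$. You also correctly flag that the delicate part is recovering the precise exponents.

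There is, however, a genuine gap at the step where you pass from a sup-norm on $|z|=\rho$ back to a Wiener norm. Your proposed mechanism, ``$|\widehat c(k)|\le\rho^{-k}\sup_{|z|=\rho}|c(z)|$ and summing the geometric series gives $\sim 1/(1-\rho)$,'' does not work: since $\rho<1$ the series $\sum_k\rho^{-k}$ diverges, so this yields nothing. The paper's cure is a \emph{dilation} (smoothing) parameter $r\in(0,1)$ that you do not mention: one builds the interpolating function $\tilde g$ for the dilated data $\{r\lambda_i\}$ and observes that $\tilde g_r(z):=\tilde g(rz)$ still interpolates $(\lambda_i,\lambda_i^n)$. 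Then Cauchy--Schwarz on the Taylor side gives
\[
\Norm{\tilde g_r}{W}\le\frac{1}{\sqrt{1-r^2}}\,\Norm{\tilde g}{H^2}\le\frac{1}{\sqrt{1-r^2}}\,\Norm{\tilde g}{H^\infty},
\]
which is the legitimate bridge from $H^\infty$ to $W$. The optimal choice $1-r^2=(1-(1+1/n)\mu)/|m|$ is what produces both the $\sqrt{|m|}$ and the extra half-power of $(1-(1+1/n)\mu)^{-1}$; these do \emph{not} come from bounding $B'$ as you suggest.

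After integrating by parts, the paper controls the resulting arc-length integral $\int_\gamma\big|(\tilde B_r(\lambda)(z-r\lambda))^{-1}\big|'\,|d\lambda|$ by Spijker's Lemma, which supplies the factor $|m|+1$. Your proposal does not name this step, and a direct pointwise bound on $|B'|$ along $|z|=\rho$ would cost roughly a full factor $|m|/(1-\rho)$, spoiling the stated constants. So the two missing ingredients you need to make the argument go through with the claimed bound are the dilation/Cauchy--Schwarz passage to $W$ and Spijker's Lemma for the post-integration-by-parts estimate.
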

Before we proceed to the proof of Theorem~\ref{result} let us discuss some immediate consequences. First, note that the condition $n>\mu/(1-\mu)$ does not significantly restrict the range of $n$, where the theorem applies. For $n\leq\mu/(1-\mu)$ it holds that the exponentially decaying factor $\mu^n\gtrapprox e^{-\mu}$ is still of order $1$. In this range bounds of the form~\eqref{firstbound} only yield a trivial statement.

As compared to Theorem~\ref{jordan} and Theorem~\ref{schur} the bound in Theorem~\ref{result}  depends  more explicitly on the spectral properties of $\cT-\cT_\infty$. The Jordan block structure of $\cT-\cT_\infty$ is reflected by the fact that the formula contains a certain factor for each factor of $m_{\cT-\cT_\infty}$. In contrast to Theorems~\ref{jordan},~\ref{schur}, Theorem~\ref{result} clarifies in which way the Jordan structure of $\cT-\cT_\infty$ influences the speed of convergence of a Markov process.

The upper bound in Theorem~\ref{result} can be made more explicit by taking the supremum over all factors in the Blaschke product individually. It is not difficult to see (see Appendix~\ref{Blaschke}) that for $\abs{\lambda}<\mu(1+1/n)\leq1$ one has
\begin{align}
\sup_{\abs{z}=\mu(1+1/n)}\left|\frac{1-\bar{\lambda}z}{z-\lambda}\right|=\frac{1-(1+1/n)\mu\abs{\lambda}}{\mu-\abs{\lambda}+\mu/n}. \label{supbound}
\end{align}
This leads to the following corollary:
\begin{corollary}\label{corresult}
Let $\sigma(\cT-\cT_\infty)=\{\lambda_1,...,\lambda_D\}$ be the spectrum of $\cT-\cT_\infty$ so that the magnitudes are ordered non-decreasingly and let $\mu=\abs{\lambda_D}$ be the spectral radius of $\cT-\cT_\infty$. Under the assumptions of Theorem~\ref{result} it holds that
\begin{align*}
\Norm{\cT^n-\cT_\infty^n}{}
\leq\mu^{n}\frac{4  C e^{2}\sqrt{\abs{m}}(\abs{m}+1)}{\left(1-(1+\frac{1}{n})\mu\right)^{3/2}}\:\prod_{m/(z-\lambda_{D})}\frac{1-(1+\frac{1}{n})\mu\abs{\lambda_i}}{\mu-\abs{\lambda_i}+\frac{\mu}{n}},
\end{align*}
where the product is taken over all $i$ such that the corresponding linear factor $(z-\lambda_i)$ occurs in a prime factorization of $m/(z-\lambda_{D})$, \emph{respecting} multiplicities and $\lambda_{D}$ stands for \emph{any} eigenvalue of magnitude $\mu$.
\end{corollary}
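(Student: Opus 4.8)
The plan is to substitute the Blaschke factorization of $1/B$ into Theorem~\ref{result} and then estimate the resulting supremum one linear factor at a time by means of \eqref{supbound}. Since the numerator of $B$ is exactly $m$, write
\[
\frac{1}{B(z)}=\prod_{m}\frac{1-\bar\lambda_i z}{z-\lambda_i},
\]
the product running over all linear factors of $m$ respecting multiplicities, and split off one factor whose root $\lambda_D$ has modulus $\mu$, so that $1/B(z)=\frac{1-\bar\lambda_D z}{z-\lambda_D}\prod_{m/(z-\lambda_D)}\frac{1-\bar\lambda_i z}{z-\lambda_i}$. Taking moduli and the supremum over the circle $\abs{z}=\mu(1+1/n)$, and using that the supremum of a product is at most the product of the suprema, gives
\[
\sup_{\abs{z}=\mu(1+1/n)}\Abs{\frac{1}{B(z)}}\leq\Bigl(\sup_{\abs{z}=\mu(1+1/n)}\Abs{\frac{1-\bar\lambda_D z}{z-\lambda_D}}\Bigr)\prod_{m/(z-\lambda_D)}\sup_{\abs{z}=\mu(1+1/n)}\Abs{\frac{1-\bar\lambda_i z}{z-\lambda_i}}.
\]

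Next I would verify the hypothesis of \eqref{supbound}: the assumption $n>\mu/(1-\mu)$ of Theorem~\ref{result} is precisely $\mu(1+1/n)<1$, and since every eigenvalue of $\cT-\cT_\infty$ satisfies $\abs{\lambda_i}\leq\mu$, one has $\abs{\lambda_i}\leq\mu<\mu(1+1/n)\leq1$ (the relevant case being $\mu>0$), so \eqref{supbound} applies to each factor. For a generic factor it evaluates to $\frac{1-(1+1/n)\mu\abs{\lambda_i}}{\mu-\abs{\lambda_i}+\mu/n}$, exactly the term appearing in the corollary. For the distinguished factor, $\abs{\lambda_D}=\mu$ collapses the denominator to $\mu/n$ and makes the numerator equal $1-(1+1/n)\mu^2\leq1$; hence that supremum is at most $n/\mu$.

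Substituting these estimates back into Theorem~\ref{result}, the factor $n/\mu$ cancels the $1/n$ in the prefactor and reduces $\mu^{n+1}$ to $\mu^n$, leaving precisely the claimed inequality. I do not foresee a genuine obstacle here; the only points requiring care are that the range condition on $n$ is exactly what licenses \eqref{supbound}, that exactly one factor $(z-\lambda_D)$ is removed from the Blaschke product so that the counting of $\sqrt{\abs{m}}(\abs{m}+1)$ and of the residual factors matches the statement, and that the resulting bound is independent of which modulus-$\mu$ eigenvalue is chosen for $\lambda_D$, consistent with the convention fixed in Section~\ref{PREL}.
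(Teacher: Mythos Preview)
Your argument is correct and matches the paper's own derivation essentially line for line: bound $\sup\abs{1/B}$ by the product of the suprema of the individual inverse Blaschke factors via \eqref{supbound}, observe that the distinguished factor with $\abs{\lambda_D}=\mu$ contributes at most $n/\mu$, and cancel this against the $\mu^{n+1}/n$ prefactor of Theorem~\ref{result} to obtain $\mu^n$. The only comment is that the paper does not spell out the proof formally but simply remarks, around the statement of the corollary, that one cancels one factor corresponding to the spectral radius against the $\mu/n$ prefactor---exactly what you do.
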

Every eigenvalue of magnitude $\mu$ contributes one factor proportional to $n/\mu$ in Equation~\eqref{supbound}. Whereas Theorem~\ref{result} contains an inverse Blaschke factor for each linear factor in the minimal polynomial $m$, in Corollary~\ref{corresult} we have canceled one of the factors corresponding to the spectral radius $\mu$ by the $\mu/n$ prefactor in Theorem~\ref{result}.

The techniques upon which the derivation of Theorem~\ref{result} builds also yield more general \emph{geometric} convergence estimates, where the exponentially decaying factor $\mu^n$ is replaced by $\beta^n$ for some $\beta>\mu$. In this case the prefactor can be chosen independent of $n$.

\begin{corollary}\label{geometric}
Let $(\cT^n)_{n\geq0}$ be a semigroup bounded by $C$, and let $\cT_\infty$ be its asymptotic evolution introduced in \eqref{Tinfty}. We write $m=m_{\cT-\cT_\infty}$ for the minimal polynomial and $\mu$ for the spectral radius of $\cT-\cT_\infty$ and $B$ for the Blaschke product \eqref{bp} associated with $m$. Then, for any $\beta\in(\mu,1)$ we have
\begin{align*}
\Norm{\cT^n-\cT_\infty^n}{}\leq\beta^{n+1}\frac{4C e\sqrt{\abs{m}}}{(1-\beta)^{3/2}}\:\sup_{\abs{z}=\beta}\left|\frac{1}{B(z)}\right|.
\end{align*}
\end{corollary}

A detailed discussion of Theorem~\ref{result} and Corollary~\ref{corresult} follows in Section~\ref{discussion}. Here, let us just mention some situations in which the above bounds might be applied.
\begin{enumerate}
\item When $\cT$ is the transition matrix of a classical time-homogenous Markov chain, Theorem~\ref{result} can be used to estimate the distance of $\cT^n$ to stationarity. For the classical 1-to-1 norm it holds that $\Norm{\cT^n}{1\rightarrow1}=1$ for any stochastic matrix and any natural number $n$, such that Theorem~\ref{result} applies with $C=1$.
\item For all $\cT\in\mathfrak{T}_+$ and any $n$ we have that $\Norm{\cT^n}{\diamond}=1$. Thus, Theorem~\ref{result} provides a convergence bound for quantum Markov chains with $C=1$. 
\item Theorem~\ref{result} holds for general power bounded operators (in finite dimensions) whose spectrum is contained in the unit disc. Therefore our result applies to cone- and base-preserving maps with the corresponding norms, more general than transition matrices of classical Markov chains and TPPMs. An important class of such operations constitute LOCC maps \cite{projmet}.
\item In the context of classical and quantum Markov chains one is often interested in the quantity $\Norm{\cT^n(v)-\cT_\infty^n(v)}{1}$, where, depending on the context, $v$ is either a probability vector or a quantum state. If $v$ is contained in an invariant subspace $\cV_{inv}$ of $\cT$ it is clear that one can improve the bound in Theorem~\ref{result}. We then have that
\begin{align*}
\Norm{\cT^n(\rho)-\cT_\infty^n(\rho)}{1}
&\leq\frac{4e^{2}\sqrt{\abs{m}}(\abs{m}+1)\mu^{n+1}}{n\left(1-(1+\frac{1}{n})\mu\right)^{3/2}}\:\sup_{\abs{z}=\mu(1+1/n)}\left|\frac{1}{B(z)}\right|\\
&\leq\frac{4 e^{2}\sqrt{\abs{m}}(\abs{m}+1)\mu^{n}}{\left(1-(1+\frac{1}{n})\mu\right)^{3/2}}\prod_{m/(z-\lambda_{D})}\frac{1-(1+\frac{1}{n})\mu\abs{\lambda_i}}{\mu-\abs{\lambda_i}+\frac{\mu}{n}},
\end{align*}
where now $B=B_{(\cT-\cT_\infty)_{inv}}$ is the Blaschke product corresponding to the minimal polynomial $m=m_{(\cT-\cT_\infty)_{inv}}$ of $\cT-\cT_\infty$ restricted to $\cV_{inv}$.
\item 
If $\cT$ has a unique eigenvalue of magnitude one Corollary~\ref{geometric} establishes a geometric estimate for the convergence towards the stationary state of the chain. In \cite{wir} this is used to analyze the sensitivity of the stationary states of such Markov chains to perturbations in the transition map. Even in the context of classical Markov chains stability estimates based on Corollary~\ref{geometric} yield a significant improvement, see~\cite{wir}. The core conceptual insight is that the (inverse pseudo-hyperbolic) distance of the eigenvalues of $\cT$ to $\beta$ determines the sensitivity of the chain to perturbation. This is in contrast to previous work \cite[Thm. 4.1]{Mit1}, where corresponding estimates involved (inverse) distances $\abs{\lambda_i-\lambda_j}^{-1}$, which diverge when the spectrum becomes degenerate. More generally, based on power-boundedness of the transition map one can prove strong spectral stability estimates~\cite{ich} and strengthen the estimates of e.g.~\cite{sense}.
\end{enumerate}

For the proof we present an upper bound on $\Norm{\cA^n}{}$ for a general power bounded operator $\cA$, whose spectrum is contained in $\mathbb{D}$ and we specialize to the case $\cA=\cT-\cT_\infty$ only at the end. More precisely, we start with any $\cA\in\linops{\cV}$ whose spectrum is contained in the open unit disc and suppose that $\Norm{\cA^n}{}\leq C$ for all $n\in\mathbb{N}$. We employ Lemma~\ref{wiener} to obtain an estimate in terms of $\Norm{z^n}{W/mW}$. The key point is then to find a good bound on $\Norm{z^n}{W/mW}$. Our approach to this problem is inspired by the proof of the sharp Kreiss matrix theorem~\cite{ttad,LeVeque,Spijker}. For convenience we shall assume that the eigenvalues $\{\lambda_1,...,\lambda_D\}$ of $\cA$ are ordered with non-decreasing magnitude and that the map $\cA$ is diagonalizable, i.e., its minimal polynomial decomposes into pairwise distinct linear factors. This assumption does not lead to any difficulties when it comes to finding upper bounds of the type of Theorem~\ref{result}. To see this, assume that, for each fixed $n$, Theorem~\ref{result} holds true for any $\cA$ such that the minimal polynomial $m_{\cA}$ decomposes into pairwise distinct linear factors. To pass to the case when $\cA$ has non-trivial Jordan structure one slightly perturbs the spectrum of $\cA$ and obtains a diagonalizable map $\cA+\epsilon$. Note that for sufficiently small $\epsilon$ the spectrum of $\cA+\epsilon$ still is contained in the open unit disc, such that $\cA+\epsilon$ is power bounded with some constant $C_\epsilon$. In the limit of $\epsilon\rightarrow0$, $C_\epsilon$ converges to $C$ \cite{AC}. Thus, for each fixed $n$ one can apply the theorem for diagonalizable matrices and pass to the limit $\epsilon\rightarrow 0$ on both sides of Theorem~\ref{result}. By continuity of the norm this implies the claimed statement.
\begin{proof}[Proof of Theorem~\ref{result}] We adapt techniques developed in \cite{AC} for general power bounded operators (see Theorem 3.20) and invoke Lemma~\ref{wiener} to transfer the problem of estimating $\Norm{\cA^n}{}$ to the one of bounding $\Norm{z^n}{W/mW}$. It follows from the definition of the function algebra, that \cite{AC}
\begin{align}
\Norm{z^n}{W/mW}=\inf\{\Norm{g}{W}\:|\:g\in W,\: g(\lambda_i)=\lambda_i^n\}\label{here}.
\end{align}
This means that the problem of bounding $\Norm{z^n}{W/mW}$ is equivalent to finding a minimal norm function $g$ that interpolates the data set $(\lambda_1,\lambda_1^n),...,(\lambda_{\abs{m}},\lambda_{\abs{m}}^n)$ in the sense that $g(\lambda_i)=\lambda_i^n$.
More generally, the task of bounding a function $f$ of a quantum channel is related to an interpolation problem in the Wiener algebra by replacing $\lambda_i^n$ by $f(\lambda_i)$. The strategy of our proof will be to consider one specific representative function $g$ in \eqref{here} and bound its norm. To achieve this we employ the following method. Instead of considering $g$ directly we choose a ``smoothing parameter\rq\rq{} $r$ and pass to a ``stretched\rq\rq{} interpolation function.\\
Given any function $f\in H^2$ and $r\in(0,1)$, we write $f_r(z):=f(rz)=\sum_{k\geq0}\hat{f}(k)r^kz^k$ and observe that by the Cauchy-Schwarz Inequality
\begin{align}
\Norm{f_r}{W}\leq\sqrt{\sum_{k\geq0}\abs{\hat{f}(k)}^2}\sqrt{\frac{1}{1-r^2}}\leq\Norm{f}{H^\infty}\sqrt{\frac{1}{1-r^2}}.\label{CS}
\end{align}
This idea was used to obtain bounds to the inverse and resolvent of a power bounded operator in \cite{AC}.\\ 
We use the Blaschke products $B(z)=\Pi_i\frac{z-\lambda_i}{1-\bar{\lambda}_i z}$ and $\tilde{B}(z)=\Pi_i\frac{z-r\lambda_i}{1-r\bar{\lambda}_i z}$, where in the latter product the spectrum is stretched by a factor of $r$. (The products are taken over all prime factors of $m_\cA$, but to avoid cumbersome notation we do not write this explicitly.) Consider now the function $g$ with
\begin{align*}
g(z)=\sum_k \lambda_k^n\frac{B(z)}{z-\lambda_k} (1-\abs{\lambda_k}^2)\prod_{j\neq k}\frac{1-\bar{\lambda}_j\lambda_k}{\lambda_k-\lambda_j}.
\end{align*}
$g$ is analytic in the unit disc and $g(\lambda)=\lambda^n$ for all $\lambda\in\sigma(T)$. To be able to use the estimate~\eqref{CS} we perform the aforementioned smoothing. We define the modified function $\tilde{g}$ by
\begin{align*}
\tilde{g}(z)=\sum_k \lambda_k^n\frac{\tilde{B}(z)}{z-r\lambda_k}(1-r^2\abs{\lambda_k}^2)\prod_{j\neq k}\frac{1-r^2\bar{\lambda}_j\lambda_k}{r\lambda_k-r\lambda_j}
\end{align*}
and observe that $\tilde{g}_r$ enjoys the same basic properties as $g$, i.e., $\tilde{g}_r$ is analytic in $\mathbb{D}$ and $\tilde{g}_r(\lambda)=\lambda^n$ for any $\lambda\in\sigma(T)$.
Thus, by Equation~\eqref{here}, we have that $\Norm{z^n}{W/mW}\leq\Norm{\tilde{g}_r}{W}$ and it follows from Inequality~\eqref{CS} that
\begin{align*}
\Norm{\tilde{g}_r}{W}\leq\sqrt{\frac{1}{1-r^2}}\Norm{\tilde{g}}{H^\infty}.
\end{align*}
By the Maximum Principle for analytic functions $\Norm{\tilde{g}}{H^\infty}$ is attained on the unit circle, that is {$\Norm{\tilde{g}}{H^\infty}=\sup_{\abs{z}=1}\abs{\tilde{g}(z)}$}. Exploiting the fact that each elementary Blaschke factor preserves the unit circle, we conclude that
\begin{align*}
\Norm{\tilde{g}}{H^\infty}=\sup_{\abs{z}=1}{\left|\sum_k\lambda_k^n \frac{1-r^2\abs{\lambda_k}^2}{z-r\lambda_k}\prod_{j\neq k}\frac{1-r^2\bar{\lambda}_j\lambda_k}{r\lambda_k-r\lambda_j}\right|}.
\end{align*}
To bound this quantity we perform a contour integration along the circle $\gamma: \phi\mapsto s e^{i\phi}$, where $s<1$ is chosen in a way such that $\gamma$ encircles all eigenvalues of $\cA $. By the Residue Theorem (note that $\abs{z}=1$) we have that
\begin{align}
\sum_k\lambda_k^n  \frac{1-r^2\abs{\lambda_k}^2}{z-r\lambda_k} \prod_{j\neq k}\frac{1-r^2\bar{\lambda}_j\lambda_k}{r\lambda_k-r\lambda_j}=
\frac{1}{2\pi i}\int_\gamma\frac{\lambda^n}{\tilde{B}_r(\lambda)}\frac{1}{z-r\lambda}\d{\lambda}.\label{integral}
\end{align}
Integration by parts gives 
\begin{align}
\frac{1}{2\pi i}\int_\gamma\frac{\lambda^n}{\tilde{B}_r(\lambda)}\frac{1}{z-r\lambda}\d{\lambda}=
&-\frac{1}{2\pi i (n+1)} \int_\gamma\lambda^{n+1}\left[\frac{1}{\tilde{B}_r(\lambda)(z-r\lambda)}\right]\rq{}\d{\lambda}\label{byparts}
\end{align}
and we arrive at
\begin{align*}
\Norm{\tilde{g}}{H^\infty}\leq\frac{ s^{n+1}}{2\pi(n+1)} \sup_{\abs{z}=1} \int_\gamma\left|\left[\frac{1}{\tilde{B}_r(\lambda)(z-r\lambda)}\right]\rq{}\right|\abs{\d{\lambda}}.
\end{align*}
The right hand integral can be interpreted as the arc length of the image of $\gamma$ under the rational function $\frac{1}{\tilde{B}_r(\lambda)(z-r\lambda)}$. For this quantity we have by Spijker\rq{}s Lemma (\cite{Spijker}, Equation (4))
\begin{align*}
\int_\gamma\left|\left[\frac{1}{\tilde{B}_r(\lambda)(z-r\lambda)}\right]\rq{}\right|\abs{\d{\lambda}}\leq
2\pi(\abs{m}+1)\sup_{\abs{\lambda}=s}\left|\frac{1}{\tilde{B}_r(\lambda)(z-r\lambda)}\right|
\end{align*}
and conclude that for $0<r<1$ and $\mu<s<1$ we have
\begin{align*}
\Norm{\tilde{g}}{H^\infty}\leq s^{n+1}\frac{(\abs{m}+1)}{(n+1)} \frac{1}{1-rs}\sup_{\abs{\lambda}=s}\left|\prod_i\frac{1-\bar{\lambda}_i r^2 \lambda}{r\lambda-r\lambda_i}\right|.
\end{align*}
In the above bound we choose $s=(1+1/n)\mu$ (where $\mu$ denotes the spectral radius of $\cA$) and notice that 
\begin{align*}
s^{n+1}=\mu^{n+1}\left(1+\frac{1}{n}\right)^{n+1}\leq e(1+1/n)\mu^{n+1},
\end{align*}
which entails
\begin{align*}
&\Norm{\tilde{g}}{H^\infty}
\leq\frac{\mu^{n+1}(\abs{m}+1)e}{nr^{\abs{m}}(1-r(1+1/n)\mu)}\sup_{\abs{\lambda}=\atop(1+1/n)\mu}\left|\prod_i\frac{1-\bar{\lambda}_i r^2 \lambda}{\lambda-\lambda_i}\right|\\
\end{align*}
and
\begin{align*}
\Norm{z^n}{W/mW}\leq
\sqrt{\frac{1}{1-r^2}}\frac{\mu^{n+1}(\abs{m}+1)e}{nr^{\abs{m}}(1-r(1+1/n)\mu)}\sup_{\abs{\lambda}=\atop(1+1/n)\mu}\left|\prod_i\frac{1-\bar{\lambda}_i r^2 \lambda}{\lambda-\lambda_i}\right|.
\end{align*}
Finally, we observe that
\begin{align*}
\sup_{\abs{\lambda}=\atop(1+1/n)\mu}\left|\prod_i\frac{1-\bar{\lambda}_i r^2 \lambda}{\lambda-\lambda_i}\right|
&=\sup_{\abs{\lambda}=\atop(1+1/n)\mu}\left|\frac{1}{B(\lambda)}\right|\ \cdot\  \prod_{i}\left|1+\frac{\bar{\lambda}_i\lambda(1-r^2)}{1-\bar{\lambda}_i\lambda}\right|\\
&\leq\sup_{\abs{\lambda}=\atop(1+1/n)\mu}\left|\frac{1}{B(\lambda)}\right|\ \cdot\ \left(1+\frac{1-r^2}{1-\mu(1+1/n)}\right)^{\abs{m}}.
\end{align*}
We can choose $1-r^2=\frac{1-\mu(1+1/n)}{\abs{m}}$ and get 
\begin{align*}
\Norm{z^n}{W/mW}\leq\frac{2e^{2}\mu^{n+1}\sqrt{\abs{m}}(\abs{m}+1)}{n(1-(1+1/n)\mu)^{3/2}}\sup_{\abs{\lambda}=\atop(1+1/n)\mu}\left|\frac{1}{B(\lambda)}\right|,
\end{align*}
where we used the fact that $(1+1/\abs{m})^{\abs{m}}\leq e$ and that, by the Bernoulli inequality for $\abs{m}>1$, $r^{\abs{m}}\geq(1-\frac{1-\mu(1+\frac{1}{n})}{2})\geq1/2$.\\
We now specialize the above derivation to the case when $\cA=\cT-\cT_\infty$. By assumption it holds for any $n$ and $\cT\in\mathfrak{T}$ that $\Norm{\cT^n}{}\leq C$ and it follows that
\begin{align*}
\Norm{(\cT-\cT_\infty)^n}{}=\Norm{\cT^n-\cT_\infty^n}{}\leq\Norm{\cT^n}{}+\Norm{\cT_\infty^n}{}\leq 2C.
\end{align*}
In total we can assert that
\begin{align*}
\Norm{\cT^n-\cT^n_\infty}{}=\Norm{(\cT-\cT_\infty)^n}{}
&\leq \frac{4 C e^{2}\abs{m_{\cT-\cT_\infty}}^{1/2}(\abs{m_{\cT-\cT_\infty}}+1)\cdot \mu^{n+1}}{n\left(1-(1+\frac{1h}{n})\mu\right)^{3/2}}\sup_{\abs{\lambda}=\atop(1+1/n)\mu}\left|\frac{1}{B(\lambda)}\right|
\end{align*} 
completing the proof of Theorem~\ref{result}.
\end{proof}
\begin{proof}[Proof of Corollary~\ref{geometric}] The corollary results from a simple truncation of the proof of Theorem~\ref{result}. It follows from \eqref{integral} that

\begin{align*}
\Norm{\tilde{g}}{H^\infty}=\frac{1}{2\pi}\sup_{\abs{z}=1}\left|\int_\gamma\frac{\lambda^n}{\tilde{B}_r(\lambda)}\frac{1}{z-r\lambda}{\d\lambda}\right|
\leq\frac{s^{n+1}}{1-rs}\sup_{\abs{\lambda}=s}\left|\frac{1}{\tilde{B}_r(\lambda)}\right|,
\end{align*}
where $s\in(\mu,1)$. One can bound as before
\begin{align*}
\left|\frac{1}{\tilde{B}_r(\lambda)}\right|\leq\frac{1}{r^{\abs{m}}}\left(1+\frac{1-r^2}{1-s}\right)^{\abs{m}}\frac{1}{\abs{B(\lambda)}}
\end{align*}
and choose $1-r^2=\frac{1-s}{\abs{m}}$. It follows
\begin{align*}
\Norm{z^n}{W/mW}\leq2es^{n+1}\frac{\sqrt{\abs{m}}}{(1-s)^{3/2}}\sup_{\abs{\lambda}=s}\frac{1}{\abs{B(\lambda)}}.
\end{align*}
\end{proof}
\subsection{Comparison to the Schur and Jordan convergence bounds}\label{discussion}
Theorem~\ref{result} significantly improves upon both the Jordan and the Schur bounds, Theorems~\ref{schur},~\ref{jordan}. In this subsection we shall illustrate this comparing the different convergence estimates for a semigroup of quantum channels. Since for all $\cT\in\mathfrak{T_+}$ we have that $\Norm{\cT}{\diamond}=1$, Theorem~\ref{result} gives a bound for the diamond norm. With the notation of Theorem~\ref{result} we have that
\begin{align}
\Norm{\cT^n-\cT_\infty^n}{\diamond}&\leq\frac{4 e^{2}\sqrt{\abs{m}}(\abs{m}+1)\cdot \mu^{n+1}}{n\left(1-(1+\frac{1}{n})\mu\right)^{3/2}}\:\sup_{\abs{z}=\atop\mu(1+1/n)}\left|\frac{1}{B(z)}\right|\label{szrewothm}\\
&\leq\frac{4 e^{2}\sqrt{\abs{m}}(\abs{m}+1)\cdot \mu^{n}}{\left(1-(1+\frac{1}{n})\mu\right)^{3/2}}\:\prod_{m/(z-\lambda_{D})}\frac{1-(1+\frac{1}{n})\mu\abs{\lambda_i}}{\mu-\abs{\lambda_i}+\frac{\mu}{n}}.\label{szrewo}
\end{align}
%
%
For the inverse Blaschke product in \eqref{szrewo} we can establish lower and upper bounds. The function $\frac{1-(1+1/n)\mu x}{(1+1/n)\mu-x}$ is monotonically increasing with $x\in[0,\mu]$ and we have that
\begin{align}
\left(\frac{1}{(1+1/n)\mu}\right)^{\abs{m}-1}\leq
\prod_{m/(z-\lambda_{D})}\frac{1-(1+\frac{1}{n})\mu\abs{\lambda_i}}{\mu-\abs{\lambda_i}+\frac{\mu}{n}}\leq\ \left(\frac{n}{\mu}(1-\mu^2)\right)^{\abs{m}-1}.\label{compare}
\end{align}
In the following we compare Inequalities~\eqref{szrewothm},~\eqref{szrewo} to the corresponding bounds resulting from the Jordan and Schur decompositions.\\
\subparagraph*{Comparison with the Jordan bound.} To establish a convergence bound for quantum channels in diamond norm one can use Theorem~\ref{jordan} together with the norm equivalence relations~\eqref{diamond1to1},~\eqref{equival}. But as Theorem~\ref{jordan} has a qualitative character only (i.e., it does not specify $C_1,\ C_2$), the constants coming from the norm equivalence are of no relevance. As expected, both Theorem~\ref{jordan} and Inequality~\eqref{szrewo} include an exponential factor $\mu^n$. Suppose that the largest Jordan block for $\lambda_D$ has size $d_\mu$ and that there is no other eigenvalue of $\cT-\cT_\infty$ of magnitude $\mu$. Then the minimal polynomial of $\cT-\cT_\infty$ contains a factor $(z-\lambda_D)^{d_\mu}$ and in~\eqref{szrewo} there are $d_\mu-1$ factors for this eigenvalue. The denominator in Inequality~\eqref{szrewo} leads to a factor $(n/\mu)^{d_\mu-1}$ in this estimate, which captures the same qualitative $n$-dependence as the upper bound of Theorem~\ref{jordan}. Due to the lower bound in Theorem~\ref{jordan} the factor $(n/\mu)^{d_\mu-1}$ is also necessary. But as compared to Theorem~\ref{jordan} Inequality~\eqref{szrewo} bears the obvious advantage that it specifies  $C_2$.
On the other hand if there are several distinct eigenvalues of magnitude $\mu$, Inequality~\eqref{szrewo} does not yield the correct asymptotic behavior from Theorem~\ref{jordan}, since \emph{any} eigenvalue of magnitude $\mu$ occurring in $m$ contributes a factor $n/\mu$. The reason for this lies in the estimate~\eqref{szrewothm}, i.e., in bounding each Blaschke factor individually, which leads to Corollary~\ref{corresult}. Roughly speaking, if there are distinct eigenvalues of magnitude $\mu$ then, for sufficiently large $n$, any $z$ of magnitude $\mu(1+1/n)$ can be close at most to one of those eigenvalues. It is not difficult to make this intuition precise and prove the upper bound of Theorem~\ref{jordan} based on Theorem~\ref{result} with the additional advantage of specifying $C_2$. Finally we note that the occurrence of the correct asymptotic $n$-dependence in Theorem~\ref{result} is linked to the integration by parts in \eqref{byparts} and our application of Spijker\rq{}s Lemma. This procedure yields the $1/n$ prefactor in Theorem~\ref{result}, which is canceled by one inverse Blaschke factor in Corollary~\ref{corresult}. Had we bounded \eqref{integral} directly by the supremum of the integrand on the circle, we would have obtained an estimate where one factor in the Blaschke product is proportional to $n/\mu$ even in case of only one eigenvalue of magnitude $\mu$.\\
\subparagraph*{Comparison with the Schur bound.}  Taking into account the norm equivalence relations~\eqref{diamond1to1},~\eqref{equival} the Schur bound entails

\begin{align*}
\Norm{\cT^n-\cT_\infty^n}{\diamond}\leq 2D^{3/4}(\mu+2D^{1/4})^{D-1}n^{D-1}\mu^{n-D+1}.
\end{align*}
If one assumes that $\lambda_{D}$ is $D$-fold degenerate with maximal Jordan block this results in a factor $(n/\mu)^{D-1}$ in Inequality \eqref{compare}. Hence, even in the case of the worst Jordan structure for $\cT-\cT_\infty$, Theorem~\ref{result} improves upon bounds obtained from Theorem~\ref{schur} \emph{exponentially} in the $D$-dependent prefactor.

Finally, we discuss some implications of the lower bound in \eqref{compare}. We use that bound to estimate how good the upper bound of Corollary~\ref{corresult} can possibly be. Note that the left hand side of Inequality~\eqref{compare} contains a factor $(1/\mu)^{\abs{m}-1}$. If all eigenvalues of $\cT$ are distinct this factor grows with the dimension of the system. That is, for \lq\lq{}generic\rq{}\rq{} $\cT$ it needs $D$ time steps until Corollary~\ref{corresult} can yield a nontrivial statement. This is unfortunate from the point of view of applications, where one is looking for estimates such that $poly(\log(D))$ steps are sufficient. It is natural to ask whether or not Theorem~\ref{result} is optimal and whether one might be able to dispense of the $(1/\mu)^{\abs{m}-1}$ prefactor. The following subsection discusses aspects related to the optimality of Theorem~\ref{result}. Even full information about the spectrum (alone) is never sufficient to prove $poly(\log{D})$ convergence. To overcome this issue one may use properties of the semigroup beyond its spectrum. One important class of semigroups for which fast convergence can be proved under additional assumptions are detailed balanced semigroups (Definition~\ref{detailedbalancedef}). We discuss the convergence of such semigroups in detail in Section~\ref{secdetbal}.

\subsection{Semigroups of Hilbert space contractions} \label{contractive} 
In this subsection we discuss semigroups of Hilbert space contractions. More precisely, suppose we are given a semigroup $(\cT^n)_{n\geq0}$ of linear operators acting on a finite-dimensional Hilbert space such that $\Norm{\cT}{\infty}\leq1$. As before, our major interest lies in bounding the quantity $\Norm{\cT^n-\cT_\infty^n}{\infty}$ in terms of the spectrum of $\cT$. Clearly, this setup is less general than our main setup in Section~\ref{semi} and one can expect better bounds. In what follows we derive an analog of Theorem~\ref{result} for contractive semigroups and discuss the optimality of the obtained bounds.

Let us adopt the notation from Theorem~\ref{result}. As before we write $\sigma{(\cT-\cT_\infty)}$ for the spectrum and $m=m_{\cT-\cT_\infty}$ for the minimal polynomial of $\cT-\cT_\infty$. $B(z)=\prod_i\frac{z-\lambda_i}{1-\bar{\lambda}_iz}$ denotes the Blaschke product associated with $m$. To avoid cumbersome notation we shall again assume that $m$ has simple zeros. The extension to the more general case does not result in any difficulties.
Before we proceed with our main discussion we briefly introduce some notation and standard concepts from spectral operator theory. We define the $\abs{m}$-dimensional model space
\begin{align*}
K_B:=H^2\ominus BH^2:=H^2\cap(BH^2)^\bot,
\end{align*}
where we employ the usual scalar product from the Hilbert space $H^2$. The model operator $M_B$ acts on $K_B$ as
\begin{align*}M_B:\:K_B&\rightarrow K_B\\
f&\mapsto M_B(f)=P_B(zf),
\end{align*}
where $P_B$ denotes the orthogonal projection on $K_B$. In other words, $M_B$ is the compression of the multiplication operation by $z$ to the model space $K_B$ (see \cite{TS} for a detailed discussion of model operators and spaces). As multiplication by $z$ has operator norm $1$ it is clear that $M_B$ is a Hilbert space contraction. More precisely, for any $\phi\in H^\infty$ the norm of $\phi(M_B)$ can be evaluated using Sarason\rq{}s approach to interpolation theory~\cite{Sarason,OF} as 
\begin{align}
\Norm{\phi(M_B)}{\infty}=\Norm{\phi}{H^\infty/mH^\infty}.\label{sarasonnorm}
\end{align}
We can also write $\Norm{\phi(M_B)}{\infty}$ as variational expression in the Hardy space $H^1$. From \cite{Ext} we get that
\begin{align}
\Norm{\phi(M_B)}{\infty}=\sup_{F\in H^1\atop\Norm{F}{1}\leq1}\left|\frac{1}{2\pi i}\int_{\abs{z}=1}\frac{\phi}{B}F\:\d z\right|.\label{extremal}
\end{align}
Note that this trivially implies
\begin{align*}
\left|\frac{1}{2\pi i}\int_{\abs{z}=1}\frac{\phi}{B}\:\d z\right|\leq\Norm{\phi(M_B)}{\infty}\leq\sup_{\abs{z}=1}\left|\frac{\phi}{B}\right|.
\end{align*}
It can be shown that the spectrum of the model operator $M_B$ defined above is given by the zeros of the corresponding Blaschke product $B$. In our case this means that $\cT-\cT_\infty$ and $M_B$ have identical spectrum. Hence, to any $\cT$ we can associate a (completely non-unitary \cite{Nagy}) contraction $M_B$ having spectrum $\sigma=\sigma{(\cT-\cT_\infty)}$.

Let us proceed by studying convergence estimates for the contractive semigroup of the form of Inequality~\eqref{firstbound}.
To start with, we prove that if $\Norm{\cT}{\infty}\leq1$ then {$\Norm{\cT-\cT_\infty}{\infty}\leq1$}, i.e. the semigroup $\{(\cT-\cT_\infty)^n\}_{n\geq0}$ is contractive, too.
\begin{proposition}\label{contra}
Let $(\cT^n)_{n\geq0}$ be a contractive semigroup on a Hilbert space and let $\cT_\infty$ be as in Equation~\eqref{Tinfty}. Then\\ 
(i) the semigroup $\{(\cT-\cT_\infty)^n\}_{n\geq0}$ is contractive, and\\
(ii) if $\cT^*(e)=\lambda e$ with $\abs{\lambda}=1$, then $\cT(e)=\bar{\lambda} e$.
\end{proposition}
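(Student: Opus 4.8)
\textbf{Proof plan for Proposition~\ref{contra}.}

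\medskip

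The overall strategy is to reduce both statements to structural facts about fixed points of contractions on Hilbert spaces, combined with the spectral decomposition used to define $\cT_\infty$. For part (ii), the key fact I would invoke is that for a contraction $\cA$ on a Hilbert space (here $\cA=\cT$, with $\Norm{\cT}{\infty}\leq1$), the fixed-point spaces of $\cA$ and $\cA^*$ coincide when the relevant eigenvalue has modulus one; more precisely, if $\Norm{\cA}{\infty}\le 1$ and $\cA^*(e)=\lambda e$ with $\abs{\lambda}=1$, then $\cA(e)=\bar\lambda e$. The cleanest way to see this is to consider the non-negative quantity $\Norm{\cA(e)-\bar\lambda e}{2}^2 = \Norm{\cA(e)}{2}^2 - 2\,\mathrm{Re}\,\overline{(\bar\lambda)}\braket{e}{\cA(e)} + \Norm{e}{2}^2$ (up to conjugation conventions), then use $\braket{e}{\cA(e)}=\braket{\cA^*(e)}{e}=\bar\lambda\Norm{e}{2}^2$ wait --- more carefully, $\braket{\cA^*(e)}{e} = \braket{\lambda e}{e}$, and one expands $\Norm{\cA(e) - \bar\lambda e}{2}^2$; the cross term becomes $2\,\mathrm{Re}(\lambda\braket{\cA(e)}{e})$, and $\braket{\cA(e)}{e}=\braket{e}{\cA^*(e)}^{*}=\overline{\lambda}\,\Norm{e}{2}^2$ wait I should not grind this. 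The point is: expanding the square and using $\Norm{\cA(e)}{2}\le\Norm{e}{2}$ together with the eigenvalue relation for $\cA^*$ shows $\Norm{\cA(e)-\bar\lambda e}{2}^2\le 0$, forcing $\cA(e)=\bar\lambda e$. This is the standard argument that the peripheral point spectrum of a contraction behaves like that of a normal operator on the corresponding eigenvectors.

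\medskip

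For part (i), I would argue that $\cT_\infty$ is the orthogonal projection of the ambient space onto the peripheral eigenspace, twisted by the unimodular eigenvalues; equivalently, by part (ii) together with Lemma~\ref{Tunendlich}(i) (all magnitude-one eigenvalues have trivial Jordan blocks), the eigenvectors of $\cT$ with eigenvalue of modulus one are simultaneously eigenvectors of $\cT^*$. This makes the peripheral subspace $\cV_1$ reducing for $\cT$: both $\cV_1$ and its orthogonal complement $\cV_1^\perp$ are $\cT$-invariant. On $\cV_1$ the operator $\cT$ acts as a unitary $U$ (diagonalizable with unimodular eigenvalues and orthogonal eigenvectors), and $\cT_\infty$ acts as that same $U$; hence $\cT-\cT_\infty$ vanishes on $\cV_1$. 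On $\cV_1^\perp$ we have $\cT_\infty=0$, so $\cT-\cT_\infty=\cT$ there, and $\Norm{\cT|_{\cV_1^\perp}}{\infty}\le\Norm{\cT}{\infty}\le1$. Since $\cV=\cV_1\oplus\cV_1^\perp$ orthogonally and each summand is invariant, $\Norm{(\cT-\cT_\infty)}{\infty}=\max\{0,\Norm{\cT|_{\cV_1^\perp}}{\infty}\}\le1$, and the same bound passes to all powers by submultiplicativity (or directly from $(\cT-\cT_\infty)^n=\cT^n-\cT_\infty^n$ via Lemma~\ref{Tunendlich}(ii)); this gives contractivity of $\{(\cT-\cT_\infty)^n\}_{n\ge0}$.

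\medskip

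I expect the main obstacle to be establishing rigorously that the peripheral eigenspace $\cV_1$ is \emph{orthogonally} reducing for $\cT$, i.e. that $\cV_1^\perp$ is $\cT$-invariant. This is where part (ii) does real work: $\cV_1^\perp$ is automatically $\cT^*$-invariant iff $\cV_1$ is $\cT$-invariant, but what one needs is that $\cV_1$ is also $\cT^*$-invariant (equivalently $\cV_1 = \cV_1^{*}$, the peripheral eigenspace of $\cT^*$), which follows because part (ii), applied to each unimodular eigenvalue, shows every $\cT^*$-eigenvector with unimodular eigenvalue is a $\cT$-eigenvector and vice versa (the argument is symmetric in $\cT\leftrightarrow\cT^*$ since $\Norm{\cT^*}{\infty}=\Norm{\cT}{\infty}\le1$). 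One should be slightly careful that ``$\cV_1$ is spanned by eigenvectors'' uses Lemma~\ref{Tunendlich}(i); on a general non-diagonalizable contraction the peripheral \emph{generalized} eigenspace could a priori be larger, but triviality of the Jordan blocks rules this out. Once orthogonal reducibility is in hand, the rest is routine.
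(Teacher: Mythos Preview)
Your proposal is correct. The argument for (ii) via expanding $\Norm{\cT(e)-\bar\lambda e}{2}^2$ and using $\Norm{\cT(e)}{2}\le\Norm{e}{2}$ together with $\braket{e}{\cT(e)}=\braket{\cT^*(e)}{e}=\bar\lambda\Norm{e}{2}^2$ gives $\Norm{\cT(e)-\bar\lambda e}{2}^2\le\Norm{\cT(e)}{2}^2-\Norm{e}{2}^2\le0$, and the reduction of (i) to the orthogonal splitting $\cV_1\oplus\cV_1^\perp$ goes through as you outline once you verify (as you do) that the Jordan complement of $\cV_1$ actually equals $\cV_1^\perp$.

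The paper's route is genuinely different in presentation: it invokes, as a black box, the Sz.-Nagy--Foias decomposition (every Hilbert-space contraction splits uniquely as an orthogonal direct sum of a unitary and a completely non-unitary part), identifies $\cT_\infty$ with the unitary summand, and reads off (i) immediately; (ii) then follows from normality of the unitary part. Your approach reverses the logical order---you prove (ii) first by the elementary norm expansion and then use it to \emph{build} the orthogonal splitting by hand---thereby reproving the finite-dimensional case of the Sz.-Nagy--Foias decomposition from scratch. The paper's version is shorter and situates the result within the standard dilation-theoretic framework; yours is self-contained and avoids any appeal to external structure theorems, which is a real advantage if one wants the argument to stay within elementary linear algebra.
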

\begin{proof}
Both follows from the fact that any contraction on a Hilbert space admits a unique decomposition into an orthogonal direct sum of a unitary and a completely non-unitary operation (\cite{Nagy}, Theorem 3.2). In our case, $\cT_\infty$ corresponds exactly to the unitary part of $\cT$ and $\cT-\cT_\infty$ is a (completely non-unitary) contraction, hence (i). (ii) is then a consequence of the normality of the unitary part. 
\end{proof}
The second part of Proposition~\ref{contra} generalizes the fact that for classical as well as for quantum Markov processes, contractivity implies that the transition map is doubly stochastic. In fact, in those cases the converse implication holds as well \cite{RuWo}.

From the first part  of Proposition~\ref{contra} and by Inequality~\eqref{calc1} it follows that
\begin{align*}
\Norm{\cT^n-\cT_\infty^n}{\infty}=\Norm{(\cT-\cT_\infty)^n}{\infty}\leq\Norm{z^n}{H^\infty}.
\end{align*}
Our previous considerations from Section~\ref{ff}, Lemma~\ref{wiener} furthermore imply
\begin{align}
\Norm{\cT^n-\cT_\infty^n}{\infty}\leq\Norm{z^n}{H^\infty/mH^\infty}.\label{Hinfcont}
\end{align}
We conclude from Equation~\eqref{sarasonnorm} that in order to upper bound~\eqref{Hinfcont} it is sufficient to consider $\Norm{M_B^n}{\infty}$. This is in contrast to our discussion of bounded semigroups on Banach spaces, where we had to rely on the Cauchy-Schwarz Inequality~\eqref{CS}. In addition, we note that $\Norm{\phi}{H^\infty/mH^\infty}=\Norm{\phi(M_B)}{\infty}$ allows us to work with $\Norm{\cdot}{H^\infty/mH^\infty}$ directly and we do not require an ad hoc function to upper bound \eqref{here}.

In our study of  bounded semigroups in Section~\ref{discussion} we have encountered a factor $(1/\mu)^{\abs{m}-1}$ in \eqref{compare} that grows exponentially with the dimension of the space on which the semigroup acts if all eigenvalues of the generator are distinct. The following proposition shows that, if in a bound of the type~\eqref{firstbound} $K$ only depends on the eigenvalue structure of $\cT$ and on $n$, then $K$ must contain such a factor. We achieve this by showing that for any contractive semigroup with generator $\cT$ there is a contractive semigroup whose generator has the same spectrum as $\cT$ but which converges slowly if $n$ is small.
\begin{proposition} \label{equalone}Let $(\cT^n)_{n\geq0}$ be a contractive semigroup acting on a $D$-dimensional Hilbert space and let $m=m_{\cT-\cT_\infty}$ denote the minimal polynomial of $\cT-\cT_\infty$ and $B$ the corresponding Blaschke product. Then there is a contractive semigroup $(\cE^n)_{n\geq0}$ such that $\cE$ has the \emph{same minimal polynomial} as $\cT$ and 
\begin{align*}
\Norm{\cE^n-\cE_\infty^n}{\infty}=\sup_{F\in H^1\atop\Norm{F}{1}\leq1}\left|\frac{1}{2\pi i}\int_{\abs{z}=1}\frac{z^n}{B} F\:\d z\right|.
\end{align*}
In particular, for all
$n<\abs{m}\leq D$
it holds that
\begin{align*}
\Norm{\cE^n-\cE_\infty^n}{\infty}=1.
\end{align*}

\end{proposition}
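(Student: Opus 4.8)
The plan is to construct the witnessing semigroup $(\cE^n)_{n\geq 0}$ explicitly from the model operator machinery just introduced. Given $\cT$ with $\cT_\infty$ the unitary part (Proposition~\ref{contra}), I would take the model operator $M_B$ associated with the Blaschke product $B$ of $m = m_{\cT-\cT_\infty}$, and set $\cE := M_B \oplus \cT_\infty$ acting on $K_B \oplus \cV_1$, where $\cV_1$ is the eigenspace of $\cT$ on which $\cT_\infty$ acts unitarily. Since $M_B$ is a Hilbert space contraction and $\cT_\infty$ is unitary, $\cE$ is a contraction; moreover $M_B$ has spectrum equal to the zeros of $B$, i.e.\ $\sigma(M_B)=\sigma(\cT-\cT_\infty)$, and $M_B$ is completely non-unitary, so the unitary part of $\cE$ is exactly $\cT_\infty$ and hence $\cE_\infty = \cT_\infty$ (up to the identification of the unitary summands). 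One should check that $\cE$ really has the \emph{same minimal polynomial} as $\cT$: the minimal polynomial of $M_B$ is $m$ (this is standard for model operators with $B$ having simple zeros — $m$ is the numerator of $B$), and the minimal polynomial of $\cT_\infty$ divides the part of $m_\cT$ supported on the unit circle, so $m_\cE = \mathrm{lcm}(m, m_{\cT_\infty}) = m_\cT$. The dimension count $\abs{m}\le D$ is automatic since $\dim K_B = \abs{m}$.

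Next I would compute $\Norm{\cE^n - \cE_\infty^n}{\infty}$. Because $\cE - \cE_\infty = M_B \oplus 0$, we get $\Norm{\cE^n - \cE_\infty^n}{\infty} = \Norm{M_B^n}{\infty} = \Norm{z^n(M_B)}{\infty}$. Now invoke the two identities already recorded in the excerpt: Sarason's lifting theorem~\eqref{sarasonnorm} gives $\Norm{z^n(M_B)}{\infty} = \Norm{z^n}{H^\infty/mH^\infty}$, and the variational formula~\eqref{extremal} (with $\phi = z^n$) gives exactly
\begin{align*}
\Norm{\cE^n - \cE_\infty^n}{\infty} = \sup_{F\in H^1,\ \Norm{F}{1}\leq 1}\left|\frac{1}{2\pi i}\int_{\abs{z}=1}\frac{z^n}{B}F\,\d z\right|,
\end{align*}
which is the first claimed equality.

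For the second claim, I must show the supremum equals $1$ whenever $n < \abs{m}$. The lower bound $\leq 1$ is clear from the general estimate $\Norm{z^n(M_B)}{\infty}\le \Norm{z^n}{H^\infty}=1$. For the matching lower bound, observe that since $\abs{m}=\deg B$ and $n<\deg B$, the function $z^n/B(z)$ has a Laurent/partial-fraction expansion at infinity whose leading behavior is $z^{n-\deg B}\to 0$; equivalently $z^n/B$ is a genuine rational function with all poles inside the disc and vanishing at infinity, so on $\abs{z}=1$ one can write $z^n/B(z) = \overline{B(z)}\,z^n \cdot \overline{\text{(unimodular)}}$... more directly: on the unit circle $\abs{B(z)}=1$, so $\abs{z^n/B(z)} = 1$ pointwise on $\abs{z}=1$, and hence choosing $F$ to be (an $H^1$ approximation of) the unimodular function $B(z)\overline{z^n}$ — which must be checked to lie in $H^1$ when $n<\deg B$, since $B(z)\overline z^{\,n}$ extends analytically to the disc precisely when the ``negative frequencies'' from $\overline{z}^{\,n}$ are absorbed by the $n$-fold vanishing... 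I would instead argue cleanly via~\eqref{sarasonnorm}: $\Norm{z^n}{H^\infty/mH^\infty} = \inf\{\Norm{z^n + m g}{H^\infty} : g\in H^\infty\}$, and since $\deg m = \abs{m} > n$, the polynomial $z^n$ cannot be reduced — any $mg$ with $g\ne 0$ has a term of degree $\ge \abs{m} > n$, and one shows the infimum over such perturbations of the sup-norm of $z^n$ on the circle stays $=1$ (e.g.\ because the $0$-th Taylor coefficient argument, or a Parseval/orthogonality argument in $H^2\supset H^\infty$, forces $\Norm{z^n+mg}{H^2}\ge\Norm{z^n}{H^2}=1$ as $mg\perp z^n$ in $H^2$ when $\deg(\text{lowest term of }mg) > n$). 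The main obstacle is exactly this last point — making rigorous that no $H^\infty$ multiple of $m$ can bring the sup-norm of $z^n$ below $1$ when $n<\abs{m}$; the cleanest route is the $H^2$ orthogonality observation (since $m H^\infty \subset mH^2$ and $z^n \perp mH^2$ for $n < \deg m$, whence $\Norm{z^n + mg}{H^\infty}\ge\Norm{z^n+mg}{H^2}\ge 1$), which I would use to close the argument.
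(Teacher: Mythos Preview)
Your construction of $\cE$ as $M_B\oplus\cT_\infty$ and the derivation of the first displayed equality via Sarason's formula~\eqref{sarasonnorm} and the dual extremal representation~\eqref{extremal} are exactly what the paper does, so that part is fine.

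The genuine gap is in your argument for $\Norm{\cE^n-\cE_\infty^n}{\infty}=1$ when $n<\abs{m}$. Your final ``clean route'' asserts that $z^n\perp mH^2$ in $H^2$ whenever $n<\deg m$, and this is simply false. Take $m(z)=z-\lambda$ with $\lambda\neq 0$ and $n=0$: then $\langle 1,\,m\cdot 1\rangle_{H^2}=\langle 1,z-\lambda\rangle_{H^2}=-\lambda\neq 0$. (Replacing $m$ by $B$ does not help either: $1\in K_B$ would require $B(0)=0$.) The parenthetical qualifier ``when $\deg(\text{lowest term of }mg)>n$'' does not save the argument, because for generic $g\in H^\infty$ the product $mg$ has a nonzero constant term, so you have not bounded the infimum over all $g$.

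The paper closes this step differently: it invokes a characterization from \cite{Ext} (Corollary~5 there) stating that for a rational $\psi$ with poles off $\abs{z}=1$,
\[
\sup_{\substack{F\in H^1\\ \Norm{F}{1}\leq1}}\left|\frac{1}{2\pi i}\int_{\abs{z}=1}\psi F\,\d z\right|=\sup_{\abs{z}=1}\abs{\psi(z)}
\]
holds if and only if $\psi$ is a constant multiple of a ratio $B_1/B_2$ of coprime finite Blaschke products with $\deg B_1<\deg B_2$. Since $z^n/B$ is exactly of this form when $n<\abs{m}$, and $\abs{z^n/B}=1$ on the unit circle, the value $1$ follows immediately. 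If you want an argument that avoids citing \cite{Ext}, a correct alternative is to use that the model operator $M_B$ has rank-one defect (i.e., $\mathrm{rank}(I-M_B^*M_B)=1$), so $M_B$ is isometric on a subspace of codimension one; iterating, $M_B^n$ is isometric on a subspace of dimension at least $\abs{m}-n>0$, whence $\Norm{M_B^n}{\infty}=1$.
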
 
The supremum in Proposition~\ref{equalone} is attained by a function $\tilde{F}=f^2$, where $f$ is in the unit ball of $K_B$ \cite{Ext}. Hence, the optimization effectively goes over a finite-dimensional vector space of rational functions with fixed poles and bounded degree (see \cite{Ext} for details). One can obtain simple lower bounds on the convergence speed of $(\cE^n)_{n\geq0}$ by choosing a certain $f\in K_B$ and evaluating the integral with the Residue theorem.\\

The second assertion of Proposition~\ref{equalone} states that for any spectrum we can construct a semigroup such that the distance of the evolution to its asymptotic behavior stays maximal for at least $\abs{m}-1$ time steps. Clearly this implies that one cannot prove that $poly(\log{\abs{m}})$ time steps bring the semigroup close to its stationary behavior if only spectral data is given. 

Note that if in a bound of the form $\Norm{\cT^n-\cT_\infty^n}{}\leq K\mu^n$, with a bounded semigroup $(\cT^n)_{n\geq0}$, $K$ only depends on the spectrum of $\cT$ then  by Proposition~\ref{equalone} we have $1\leq K\mu^{\abs{m}-1}$. That is, in this case we obtain the lower bound $K\geq(1/\mu)^{\abs{m}-1}$.

\begin{proof}[Proof of Proposition~\ref{equalone}] The first assertion is clear by choosing \lq\lq{}$\cE:=\cT_\infty\oplus M_B$\rq{}\rq{} such that $\cE_\infty=\cT_\infty$ (on the unitary subspace) and  $\cE-\cE_\infty=0\oplus M_B$. For the second we consider the extremal problem Equation~\eqref{extremal}. Let $\psi$ be any rational function with poles away from the unit circle $\abs{z}=1$. Corollary~5 in \cite{Ext} asserts that, we have 
\begin{align*}
\sup_{F\in H^1\atop\Norm{F}{1}\leq1}\left|\frac{1}{2\pi i}\int_{\abs{z}=1}\psi F\:\d z\right|=\sup_{\abs{z}=1}\left|\psi{(z)}\right|
\end{align*}
if and only if $\psi$ is a constant multiple of the quotient of two finite Blaschke products $B_1,B_2$ having no common zeros and such that the degree of $B_1$ is strictly smaller than the degree of $B_2$ ($\abs{B_1}<\abs{B_2}$), i.e., $\psi=c\frac{B_1}{B_2}$ for some $c\in\mathbb{C}$. Let $B$ denote the Blaschke product associated with $m$, it follows readily that 
\begin{align*}
\Norm{M_B^n}{\infty}=\sup_{F\in H^1\atop\Norm{F}{1}\leq1}\left|\frac{1}{2\pi i}\int_{\abs{z}=1}\frac{z^n}{B} F\:\d z\right|=1
\end{align*}
holds for $n<\abs{m}$.
\end{proof}

To gain a better understanding of weather the derivation of Theorem~\ref{result} is optimal, i.e. whether or not the obtained estimate is sharp, let us prove an analog of Theorem~\ref{result} for semigroups of Hilbert space contractions. The derivation is based on techniques similar to those that led to Theorem~\ref{result}, but in the case at hand we can take a more direct approach based on the theory of model operators. 
\begin{proposition}\label{contractiveresult} Let $(\cT^n)_{n\geq0}$ be a contractive semigroup on a $D$-dimensional Hilbert space and let $\cT_\infty$ be the operator introduced in \eqref{Tinfty} (i.e., the unitary part of $\cT$). We write $m=m_{\cT-\cT_\infty}$ for the minimal polynomial and $\mu$ for the spectral radius of $\cT-\cT_\infty$. $B$ denotes the Blaschke product associated with $m$. Then, for $n>\frac{\mu}{1-\mu}$ we have

\begin{align*}
\Norm{\cT^n-\cT_\infty^n}{\infty}\leq\mu^{n+1}\frac{2\abs{m} e}{n(1-(1+1/n)^2\mu^2)}\sup_{\abs{z}=\mu(1+1/n)}\left|\frac{1}{B(z)}\right|.
\end{align*}
\end{proposition}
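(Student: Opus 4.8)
The plan is to mimic the structure of the proof of Theorem~\ref{result}, but exploit the extra leverage that contractivity provides, namely Equation~\eqref{sarasonnorm}, which lets us identify $\Norm{z^n}{H^\infty/mH^\infty}$ with $\Norm{M_B^n}{\infty}$ directly, bypassing the Cauchy--Schwarz step~\eqref{CS}. Concretely, by Proposition~\ref{contra}(i) the semigroup $\{(\cT-\cT_\infty)^n\}_{n\geq0}$ is contractive, so by the $H^\infty$ functional calculus together with Lemma~\ref{wiener} (as already recorded in Inequality~\eqref{Hinfcont}) we have $\Norm{\cT^n-\cT_\infty^n}{\infty}\leq\Norm{z^n}{H^\infty/mH^\infty}$. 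Thus it suffices to bound $\Norm{z^n}{H^\infty/mH^\infty}$. As in the proof of Theorem~\ref{result} I would produce one convenient representative: rather than the smoothed interpolation function $\tilde g_r$, I would take the (non-smoothed) interpolant
\begin{align*}
g(z)=\sum_k\lambda_k^n\frac{B(z)}{z-\lambda_k}(1-\abs{\lambda_k}^2)\prod_{j\neq k}\frac{1-\bar\lambda_j\lambda_k}{\lambda_k-\lambda_j},
\end{align*}
which is analytic on $\mathbb{D}$ with $g(\lambda_i)=\lambda_i^n$, so that $\Norm{z^n}{H^\infty/mH^\infty}\leq\Norm{g}{H^\infty}=\sup_{\abs{z}=1}\abs{g(z)}$ by the Maximum Principle.

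The next block of steps is the contour-integral representation exactly as in Theorem~\ref{result}. Using that each elementary Blaschke factor preserves $\abs{z}=1$, one rewrites $g$ on the unit circle as $\sup_{\abs{z}=1}\bigl|\sum_k\lambda_k^n\frac{1-\abs{\lambda_k}^2}{z-\lambda_k}\prod_{j\neq k}\frac{1-\bar\lambda_j\lambda_k}{\lambda_k-\lambda_j}\bigr|$, and then by the Residue Theorem this equals $\bigl|\frac{1}{2\pi i}\int_\gamma\frac{\lambda^n}{B(\lambda)}\frac{1}{z-\lambda}\,\d\lambda\bigr|$ where $\gamma$ is the circle $\abs{\lambda}=s$ with $\mu<s<1$ encircling all eigenvalues. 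Integration by parts in $\lambda^n$ converts this to $\frac{1}{2\pi(n+1)}\bigl|\int_\gamma\lambda^{n+1}\bigl[\frac{1}{B(\lambda)(z-\lambda)}\bigr]'\,\d\lambda\bigr|$, and applying Spijker's Lemma to the arc length gives
\begin{align*}
\int_\gamma\Bigl|\Bigl[\tfrac{1}{B(\lambda)(z-\lambda)}\Bigr]'\Bigr|\abs{\d\lambda}\leq2\pi(\abs{m}+1)\sup_{\abs{\lambda}=s}\Bigl|\tfrac{1}{B(\lambda)(z-\lambda)}\Bigr|.
\end{align*}
Taking the supremum over $\abs{z}=1$ bounds $\abs{z-\lambda}\geq1-s$ from below, so $\Norm{g}{H^\infty}\leq\frac{s^{n+1}(\abs{m}+1)}{(n+1)(1-s)}\sup_{\abs{\lambda}=s}\bigl|\frac{1}{B(\lambda)}\bigr|$.

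Finally I would choose $s=(1+1/n)\mu$ and simplify. Then $s^{n+1}=\mu^{n+1}(1+1/n)^{n+1}\leq e(1+1/n)\mu^{n+1}\leq 2e\,\mu^{n+1}$ (for $n\geq1$); the factor $1/(1-s)=1/(1-(1+1/n)\mu)$ together with the $1/(n+1)$ should be recombined into the claimed form. To get the stated denominator $n(1-(1+1/n)^2\mu^2)$ I would note $1-(1+1/n)^2\mu^2=(1-(1+1/n)\mu)(1+(1+1/n)\mu)$, so the single factor $1/(1-s)$ in the bound above can be written as $(1+(1+1/n)\mu)/(1-(1+1/n)^2\mu^2)$ and the numerator factor $1+(1+1/n)\mu\leq 2$ absorbed, while $1/(n+1)\leq 1/n$; combining the constants $2e$ and the bound on $\abs{z-\lambda}$ yields the claimed $\frac{2\abs{m}e}{n(1-(1+1/n)^2\mu^2)}$. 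Since $s=(1+1/n)\mu<1$ requires exactly $n>\mu/(1-\mu)$, this is the stated range. The main obstacle is purely bookkeeping: tracking the various $O(1)$ constants and the rearrangement of $1/(1-s)$ into the symmetric form with $1-(1+1/n)^2\mu^2$ so that the final constant comes out as $2\abs{m}e$ rather than something larger; there is no conceptual difficulty beyond the Theorem~\ref{result} argument, and indeed the contractive setting is easier because the Cauchy--Schwarz loss of $\sqrt{1/(1-r^2)}$ and the associated optimization over the smoothing parameter $r$ are entirely avoided.
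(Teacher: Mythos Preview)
Your approach is valid but differs from the paper's, and your final constant is slightly off.

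\textbf{The different route.} You bound $\Norm{z^n}{H^\infty/mH^\infty}$ by choosing the explicit interpolant $g$ from the proof of Theorem~\ref{result} and then running the contour integral/Spijker argument on the kernel $\tfrac{1}{B(\lambda)(z-\lambda)}$. The paper instead invokes the duality~\eqref{extremal}: it writes $\Norm{z^n}{H^\infty/mH^\infty}=\Norm{M_B^n}{\infty}=\sup_{\Norm{F}{H^1}\leq1}\bigl|\tfrac{1}{2\pi i}\int_{\abs{z}=1}\tfrac{z^n}{B}F\,\d z\bigr|$, takes the extremal $\tilde F=f^2$ with $f\in K_B$, deforms the contour to $\abs{\lambda}=\mu(1+1/n)$, integrates by parts, and applies Spijker to $\tilde F/B$. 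The two ingredients that replace yours are: (i) $\tilde F/B$ has exactly $2\abs{m}$ poles, so Spijker yields the factor $2\abs{m}$ directly; (ii) the pointwise $H^1$ bound $\abs{\tilde F(\lambda)}\leq\tfrac{1}{1-\abs{\lambda}^2}$ produces the denominator $1-(1+1/n)^2\mu^2$ without any conversion. Your approach is more elementary---no model-space machinery or extremal $H^1$ function is needed---while the paper's is more structural and lands on the stated constant with no slack.

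\textbf{The constant.} Tracking your estimates: Spijker on $\tfrac{1}{B(\lambda)(z-\lambda)}$ gives $\abs{m}+1$, and $\tfrac{s^{n+1}}{n+1}\leq\tfrac{e\mu^{n+1}}{n}$ (using $(1+1/n)^{n+1}/(n+1)=\,(1+1/n)^n/n\leq e/n$), so you arrive at $\tfrac{e(\abs{m}+1)\mu^{n+1}}{n(1-s)}\sup\abs{1/B}$. Rewriting $\tfrac{1}{1-s}=\tfrac{1+(1+1/n)\mu}{1-(1+1/n)^2\mu^2}\leq\tfrac{2}{1-(1+1/n)^2\mu^2}$ then gives a prefactor $2e(\abs{m}+1)$, not $2e\abs{m}$. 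So your ``bookkeeping'' remark is apt: the method is sound, but the exact constant $2\abs{m}e$ in the statement does not fall out of your route without an extra factor of $(\abs{m}+1)/\abs{m}\leq 2$.
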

As before, we can bound all terms in the Blaschke product individually (see Appendix~\ref{Blaschke}) and find (compare Corollary~\ref{corresult})
\begin{align*}
\Norm{\cT^n-\cT_\infty^n}{\infty}\leq\mu^n\frac{2\abs{m} e}{1-(1+1/n)^2\mu^2}\prod_{i\neq\abs{m}}\frac{1-\abs{\lambda_i}\mu(1+1/n)}{\mu(1+1/n)-\lambda_i}.
\end{align*}

\begin{proof}[Proof of Proposition~\ref{contractiveresult}] The derivation proceeds along the lines of Theorem~\ref{result}. We use an $H^\infty$ functional calculus to bound $\Norm{\cT^n-\cT_\infty^n}{\infty}$ in terms of $\Norm{z^n}{H^\infty /mH^\infty}$. The latter expression can be rewritten using a contour integral similar to Equation~\eqref{integral}, integrate by parts, and finally apply Spijker\rq{}s Lemma.  We have already mentioned that
\begin{align*}
\Norm{\cT^n-\cT_\infty^n}{\infty}&\leq\Norm{z^n}{H^\infty /mH^\infty}
=\Norm{M_B^n}{\infty}=
\sup_{F\in H^1\atop\Norm{F}{1}\leq1}\left|\frac{1}{2\pi i}\int_{\abs{z}=1}\frac{z^n}{B}F\:\d z\right|
\end{align*}
and that the supremum in this extremal problem is attained by some function $\tilde{F}=f^2$ with $f\in K_B$ \cite{Ext}. Thus, $\tilde{F}/B$ is a rational function with $2\abs{m}$ poles located at $(\xi_1,...,\xi_{\abs{m}},\bar{\xi}_1^{-1},....,\bar{\xi}_{\abs{m}}^{-1})$, where $\xi_i$ are the zeros of $m$. In the above integral we can change the contour of integration and integrate along the circle $\gamma:\phi\mapsto \mu(1+1/n)e^{i\phi}$. Integrating by parts and and applying Spijker\rq{}s Lemma \cite{Spijker} we obtain
\begin{align*}
\left|\frac{1}{2\pi i}\int_{\gamma}\frac{z^n}{B}\tilde{F}\:\d z\right|&=\frac{1}{2\pi (n+1)}\left|\int_{\gamma}z^{n+1}\left(\frac{\tilde{F}}{B}\right)\rq{}\:\d z\right|\\
&\leq\frac{\mu^{n+1}(1+1/n)^{n+1}}{2\pi(n+1)}\int_{\gamma}\left|\left(\frac{\tilde{F}}{B}\right)\rq{}\right|\abs{\d z}\\
&\ \leq\frac{2\abs{m}\mu^{n+1}(1+1/n)^{n+1}}{n+1}\sup_{\abs{z}=\mu(1+1/n)}\left|\frac{\tilde{F}}{B}\right|.
\end{align*}
It is  known that for $F\in H^1$ and $z\in\mathbb{D}$ one can bound $\abs{F(z)}\leq\frac{1}{1-\abs{z}^2}\Norm{F}{H^1}$ \cite{Macin} and with $(1+1/n)^n\leq e$ we finally obtain
\begin{align*}
\Norm{\cT^n-\cT_\infty^n}{\infty}&\leq\frac{2\abs{m} e\mu^{n+1}}{n(1-(1+1/n)^2\mu^2)}\sup_{\abs{z}=\mu(1+1/n)}\left|\frac{1}{B}\right|.
\end{align*}
%
\end{proof}
\subsection{Slow convergence for Markov chains} \label{pitty}
Proposition~\ref{equalone} provides an example of a slowly converging contractive semigroup with arbitrary given spectrum. One might wonder in how far the phenomenon extends to the Markov chain setup. When $\cT$ is the transition map of a classical or quantum Markov chain, is it possible to prove~\eqref{firstbound}
where, $K$ should only depend on the spectrum of $\cT$ and $n$ but such that the stationary behavior sets in after $poly(\log(D))$ time steps? The following example shows that this can not be the case.

We construct a classical stochastic $D\times D$ matrix $T$ with real positive spectrum such that $\Norm{T^n-T_\infty}{1\rightarrow 1}=2$ for $n\leq D-2$. Let, as always, $\mu$ denote the spectral radius of $T-T_\infty$. We write $\{e_i\}_{i=1,...,D}$ for the canonical column vectors, i.e., $(e_i)_j=\delta_{ij}$ and for $\lambda_i\in[0,1)$, $1\leq i\leq D-1$, we define
$$T:=\begin{pmatrix} \lambda_1 &  &  &  &  \\ 1-\lambda_1 & \lambda_2 & & & &  \\ & 1-\lambda_2 & \lambda_3& & &  \\ & & \ddots& \ddots& & \\ & & & & \lambda_{D-1}& \\  & & & & 1-\lambda_{D-1}& 1 \end{pmatrix}.$$
$T$ is a stochastic matrix with spectrum $\sigma(T)=\{\lambda_1,...,\lambda_{D-1},1\}$. Since $\lambda_i<1$ for large $n$ the image of $T^n$ converges to an one-dimensional subspace corresponding to the eigenvalue $1$. We have that $T_\infty=\lim_{n\rightarrow\infty}T^n$ and observe that $T_\infty e_1=e_D$.
It is not difficult to see that for $n\leq D-2$ the $D$-th entry of the vector $T^ne_1$ is always zero, $\braket{e_D}{T^ne_1}=0$. It follows that $\Norm{(T^n-T_\infty)e_1}{1}=2$ (where $\Norm{\cdot}{1}$ denotes the 1-norm, Section~\ref{clquMach}) and we conclude that $\Norm{T^n-T_\infty}{1\rightarrow 1}=2$ for $n\leq D-2$. As before, if $K$ only depends on the spectrum of $\cT$ this implies that $K\geq \left(1/\mu\right)^{D-2}$. Note that the above reasoning does not depend on the exact values of the eigenvalues (as long as they are non-negative). This suggests that generally the spectrum $\sigma(T)$ does not contain sufficient information to prove $poly(\log{D})$ fast convergence estimates. Since every classical stochastic matrix can be embedded into a quantum channel, the lower bound on $K$ is also true for quantum channels.

\section{Convergence bounds from detailed balance}\label{secdetbal}
Applications often rely on fast convergence in the sense that  $poly(\log{D})$ steps should suffice for the asymptotic behavior to set in.
In our previous discussion we have argued that such bounds cannot rely on spectral data alone. To obtain better convergence estimates one requires additional knowledge about the semigroup. In this section we will derive convergence estimates for a general bounded semigroup under the condition that its generator be related to a Hermitian map in a certain way -- for classical and quantum Markov processes this will correspond to the well-known \emph{detailed balance condition} (see, e.g., \cite{mixing,group,temme1}). Throughout this section we require the state space $\cV$ to be equipped with a scalar product $\braket{\cdot}{\cdot}$, which induces norms $\Norm{\cdot}{2}$ and $\Norm{\cdot}{\infty}$ on $\cV$ and $\linops{\cV}$, respectively, and for convenience we will sometimes assume an orthonormal basis in $\cV$ to be fixed (cf.~Subsection \ref{prel:not}).

\subsection{General bound}\label{subsectgeneraldetailedbalance}
We start with a generalization of the detailed balance condition for classical Markov chains. This allows us to employ the corresponding property in the context of bounded semigroups.
\begin{definition}[Detailed balance for linear maps]\label{detailedbalancedef}
Let a linear map $\cT\in\linops{\cV}$ be given. If $\cB\in\linops{\cV}$ is  positive-definite (i.e., $\braket{v}{\cB(v)}>0$~$\forall v\in\cV\setminus\{0\}$) and satisfies $\cT\cB=\cB\cT^*$, then we say that $\cT$ satisfies the detailed balanced condition(with respect to $\cB$).
\end{definition}
This definition is equivalent to saying that $\cT$ is Hermitian with respect to \emph{some} scalar product on the space $\cV$, namely the scalar product $\braket{\cdot}{\cB^{-1}(\cdot)}$, but we choose the formulation with \emph{given} scalar product $\braket{\cdot}{\cdot}$ (independent of $\cT$) and explicit use of $\cB$. Note further that, due to strict positive-definiteness, $\cB$ in the above definition is in particular Hermitian and invertible. In conventional formulations of the detailed balance condition the map $\cB$ is not required to be strictly positive-definite, but we do so here as the derived bounds become trivial otherwise (see below).

The detailed balance condition for a linear map $\cT$ gives
$$\cB^{-1/2}\cT\cB^{1/2}~=~\cB^{1/2}\cT^*\cB^{-1/2}~,$$
which means that $\cB^{-1/2}\cT\cB^{1/2}$ is Hermitian, therefore has only real eigenvalues $\lambda_i\in{\mathbb{R}}$ ($i=1,\ldots,D$), and is unitarily diagonalizable:
$$\cU^*\cB^{-1/2}\cT\cB^{1/2}\cU~=~\Lambda~=~\begin{pmatrix} \lambda_1 &  &  \\  & \ddots &  \\ &  & \lambda_D \end{pmatrix}~.$$
This equation implies that $\cT$ is diagonalized by the similarity transformation $S:=\cB^{1/2}\cU$ (i.e.~$S^{-1}\cT S=\Lambda$). Note that $\cT$ has spectrum $\{\lambda_i\}_i$, too.

If $\cT$ is now power-bounded, i.e.,~the generator of a bounded semigroup, the definition in Equation~\eqref{Tinfty} implies that $\cT_\infty$ is diagonalized by $S$ as well,
$$\cU^*\cB^{-1/2}\cT_\infty\cB^{1/2}\cU~=~\Lambda_\infty,$$
where $\Lambda_\infty$ is obtained from $\Lambda$ by deleting all entries of magnitude smaller than $1$. $\Lambda-\Lambda_\infty$ is thus diagonal with operator norm $\mu<1$, where $\mu$ is the spectral radius of $\cT-\cT_\infty$. We thus arrive at the following convergence estimate:
\begin{align*}
\Norm{(\cT-\cT_\infty)^n}{\infty}&=\Norm{\cB^{1/2}\cU(\Lambda-\Lambda_\infty)^n\cU^*\cB^{-1/2}}{\infty}\\
&\leq\Norm{\cB^{1/2}}{\infty}\Norm{\cU}{\infty}\Norm{(\Lambda-\Lambda_\infty)^n}{\infty}\Norm{\cU^*}{\infty}\Norm{\cB^{-1/2}}{\infty}\\
&=\mu^n\Norm{\cB^{1/2}}{\infty}\Norm{\cB^{-1/2}}{\infty}~.
\end{align*}
(The latter two factors may be recognized as the \emph{condition number} of $\cB^{1/2}$.) We formulate this as a theorem:
\begin{theorem}\label{detailedbalancethm}
Let $\cV$ be a (real or complex) vector space with scalar product, and $\cT\in\linops{\cV}$ be the generator of a bounded semigroup $(\cT^n)_{n\geq0}$, which satisfies detailed balanced w.r.t.~a positive-definite $\cB\in\linops{\cV}$. Denote by $\mu$ the spectral radius of $\cT-\cT_\infty$. Then, for any $n\in\mathbb{N}$,
$$\Norm{\cT^n-\cT_\infty^n}{\infty}~\leq~\mu^n\Norm{\cB^{1/2}}{\infty}\Norm{\cB^{-1/2}}{\infty}~,$$
where $\Norm{\cdot}{\infty}$ denotes the operator norm on $\linops{\cV}$.
\end{theorem}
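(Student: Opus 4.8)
The plan is to exploit the detailed balance condition to reduce $\cT$ to a Hermitian operator by a similarity transformation, diagonalize, and then pay only the condition number of that transformation. First I would conjugate the relation $\cT\cB=\cB\cT^*$ by $\cB^{-1/2}$ (which exists and is Hermitian since $\cB$ is strictly positive-definite) to obtain
$$\cB^{-1/2}\cT\cB^{1/2}=\cB^{1/2}\cT^*\cB^{-1/2}=\left(\cB^{-1/2}\cT\cB^{1/2}\right)^*.$$
Hence $\cB^{-1/2}\cT\cB^{1/2}$ is Hermitian, so it has real eigenvalues $\lambda_1,\dots,\lambda_D$ and is unitarily diagonalizable, $\cU^*\cB^{-1/2}\cT\cB^{1/2}\cU=\Lambda=\mathrm{diag}(\lambda_1,\dots,\lambda_D)$ for some unitary $\cU$. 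Setting $S:=\cB^{1/2}\cU$ gives $S^{-1}\cT S=\Lambda$, so in particular $\cT$ is diagonalizable with spectrum $\{\lambda_i\}$.

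Next I would argue that $\cT_\infty$ is diagonalized by the same $S$. Since $\cT$ is diagonalizable, each spectral projector $\cP_i$ in its Jordan decomposition is a polynomial in $\cT$, and the definition \eqref{Tinfty} expresses $\cT_\infty$ as a linear combination of the $\cP_i$ with $\abs{\lambda_i}=1$ (those blocks being trivial by Lemma~\ref{Tunendlich}(i)); therefore $\cT_\infty$ is itself a polynomial in $\cT$ and $S^{-1}\cT_\infty S=\Lambda_\infty$, where $\Lambda_\infty$ is obtained from $\Lambda$ by zeroing out every diagonal entry of magnitude strictly less than $1$. Consequently $S^{-1}(\cT-\cT_\infty)S=\Lambda-\Lambda_\infty$ is diagonal, its nonzero entries being exactly the $\lambda_i$ with $\abs{\lambda_i}<1$, so $\Norm{(\Lambda-\Lambda_\infty)^n}{\infty}=\mu^n$, with $\mu$ the spectral radius of $\cT-\cT_\infty$.

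To finish, I would invoke Lemma~\ref{Tunendlich}(ii) to write $\cT^n-\cT_\infty^n=(\cT-\cT_\infty)^n=S(\Lambda-\Lambda_\infty)^n S^{-1}=\cB^{1/2}\cU(\Lambda-\Lambda_\infty)^n\cU^*\cB^{-1/2}$, and then apply submultiplicativity of the operator norm together with $\Norm{\cU}{\infty}=\Norm{\cU^*}{\infty}=1$ to conclude $\Norm{\cT^n-\cT_\infty^n}{\infty}\leq\mu^n\Norm{\cB^{1/2}}{\infty}\Norm{\cB^{-1/2}}{\infty}$. There is no genuine obstacle; the only points needing a line of care are the claim that $\cT$ and $\cT_\infty$ are simultaneously diagonalized (which is where Lemma~\ref{Tunendlich}(i) enters), and the observation that strict positive-definiteness of $\cB$ is precisely what makes $\cB^{\pm 1/2}$ — and hence the right-hand side of the bound — well defined and finite.
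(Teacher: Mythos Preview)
Your proof is correct and follows essentially the same route as the paper: conjugate by $\cB^{-1/2}$ to render $\cT$ Hermitian up to similarity, unitarily diagonalize, observe that $\cT_\infty$ is simultaneously diagonalized, and then bound $(\cT-\cT_\infty)^n$ by submultiplicativity, paying only the condition number of $\cB^{1/2}$. The only difference is cosmetic: you spell out why $\cT_\infty$ is a polynomial in $\cT$ and invoke Lemma~\ref{Tunendlich}(ii) explicitly, whereas the paper simply cites the definition~\eqref{Tinfty} for the simultaneous diagonalization.
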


We now discuss detailed balance more specifically for classical and quantum Markov chains. First observe that, if $e\in\cV$ is a fixed point of $\cT^*$, i.e.~$\cT^*(e)=e$, then $\pi:=\cB(e)$ satisfies
$$\cT(\pi)=\cT\cB(e)=\cB\cT^*(e)=\cB(e)=\pi~,$$
i.e.,~$\pi$ is fixed by the semigroup generator $\cT$. Conversely, if $\pi$ is a fixed point of $\cT$, then $e:=\cB^{-1}(\pi)$ is left invariant by $\cT^*$. For a classical Markov chain the generator satisfies $T^*(e)=e$ with $e=\sum_{i=1}^De_i=(1,...,1)$ and $\cT^*(\id)=\id$
holds for generators of quantum Markov chains (see Section~\ref{clquMach}). Thus, for classical and quantum Markov chains the detailed balance condition immediately yields a fixed point of the transition map.

In the theory of classical Markov chains, a stochastic matrix $T\in{\mathbb{R}}^{d\times d}$ is usually defined to be detailed balanced w.r.t.\ the probability distribution $\pi\in{\mathbb R}^d$ (i.e.\ $\pi_i\geq0$ and $\sum_i\pi_i=1$), if $T_{ji}\pi_i=T_{ij}\pi_j$ holds for all $i,j$ (see, e.g., \cite{mixing,group}). Defining a diagonal matrix $B$ with entries $B_{ii}:=\pi_i$, the latter condition can be written as $TB=BT^*$. If furthermore the fixed-point probability distribution $\pi$ has full support (i.e.~$\pi_i>0$ $\forall i$), then $T$ is detailed balanced w.r.t.\ $B$ in the sense of our Definition~\ref{detailedbalancedef}. ($\pi=Be$ will necessarily be a fixed point of $T$.) Due to normalization it holds that $\min_i\pi_i\leq1/d$. Using this and the norm equivalence~\eqref{clequival}, Theorem \ref{detailedbalancethm} yields the following well-known convergence estimate \cite{mixing,group} for the special case of a classical Markov chain that satisfies detailed balance w.r.t.~the distribution $\pi$:
\begin{align}
\Norm{T^n-T_\infty^n}{1-1}~\leq~\mu^n\sqrt{d}\sqrt{\frac{\max_i\pi_i}{\min_i\pi_i}}~\leq~\frac{\mu^n}{\min_i\pi_i}~.\label{classmarkovchainestimate}
\end{align}
This estimate may become trivial if detailed balance is defined without the full-support condition on $\pi$ as one may then have $\min_i\pi_i=0$. On the other hand, if one has a positive lower bound on $\min_i\pi_i$, Equation~\eqref{classmarkovchainestimate} may become a useful convergence estimate. This technique is frequently used for detailed balanced chains that have a (unique) full-rank probability distribution as fixed point, and where one can find a ``good'' lower bound on $\min_i\pi_i$ \cite{mixing,group,cutoff}.
Often the situation arises that the chain converges to a Gibbs state $\pi_i=e^{-\beta H_i}/Z$ at finite inverse temperature $\beta\in[0,\infty)$ with $Z:=\sum_ie^{-\beta H_i}$. An important class of Markov chains that obey the detailed balance condition are Metropolis Hastings Markov Chains \cite{mixing}.

There are different generalizations of the detailed balance condition to quantum Markov chains \cite{temme1}, which we, however, all capture by Definition \ref{detailedbalancedef}. Let us specialize to the quantum detailed balance condition that most immediately generalizes the classical condition from the previous paragraph to the non-commutative case in a symmetric way and that has been employed for proving convergence of quantum Markov chains before (e.g. \cite{metro}). Namely, given a positive trace-preserving map $\cT\in\mathfrak{T}$ acting on the set $\cM_d$ of $d\times d$ matrices, we consider the detailed balance condition induced by the map $\cB_\sigma(X):=\sqrt{\sigma}X\sqrt{\sigma}$, where $\sigma\in\cM_d$ is a density matrix of full rank. Again, due to trace-preservation, it is easy to see that if $\cT$ is detailed balanced w.r.t.\ $\cB_\sigma$, then $\sigma=\cB_\sigma(\ii)$ is a fixed point of $\cT$. This leads to the following convergence result for the quantum case:
\begin{corollary}\label{coroquantdetailed}
Let $\cT:\cM_d\to\cM_d$ be a positive trace-preserving map, and $\sigma\in\cM_d$ be a full-rank density matrix such that
$$\sqrt{\sigma}\cT^*(X)\sqrt{\sigma}~=~\cT(\sqrt{\sigma}X\sqrt{\sigma})\quad\forall X\in\cM_d~.$$
Denote by $\mu$ the spectral radius of $\cT-\cT_\infty$. Then, for any $n\in\mathbb{N}$,
\begin{align*}
\Norm{\cT^n-\cT_\infty^n}{1-1}~\leq~\mu^n\sqrt{d}\sqrt{\frac{\lambda_{max}(\sigma)}{\lambda_{min}(\sigma)}}~\leq~\frac{\mu^n}{\lambda_{min}(\sigma)}~,
\end{align*}
where $\lambda_{min}(\sigma)$ and $\lambda_{max}(\sigma)$ denote the minimal and maximal eigenvalues of $\sigma$, respectively.
If, in addition, $\sigma=e^{-\beta H}/\trace{e^{-\beta H}}$ is the Gibbs state at inverse temperature $\beta\in[0,\infty)$ of a bounded Hamiltonian $H\in\cM_d$, then
\begin{align*}
\Norm{\cT^n-\cT_\infty^n}{1-1}~\leq~\mu^n\,d\,e^{2\beta\Norm{H}{\infty}}~.
\end{align*}
\end{corollary}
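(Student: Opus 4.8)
The plan is to read Corollary~\ref{coroquantdetailed} as the specialization of Theorem~\ref{detailedbalancethm} to the positive-definite map $\cB_\sigma(X):=\sqrt{\sigma}X\sqrt{\sigma}$, followed by an explicit computation of the condition number $\Norm{\cB_\sigma^{1/2}}{\infty}\Norm{\cB_\sigma^{-1/2}}{\infty}$ and a passage from the operator norm to the $1$-to-$1$ norm. First I would check the hypotheses of Theorem~\ref{detailedbalancethm}: the map $\cT\in\mathfrak{T}$ is power-bounded since $\Norm{\cT}{1\rightarrow1}=1$, and for $\cB=\cB_\sigma$ the intertwining relation $\cT\cB_\sigma=\cB_\sigma\cT^*$ is \emph{exactly} the assumed detailed balance identity $\sqrt{\sigma}\cT^*(X)\sqrt{\sigma}=\cT(\sqrt{\sigma}X\sqrt{\sigma})$. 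Positive-definiteness of $\cB_\sigma$ on the space of Hermitian matrices follows from $\braket{X}{\cB_\sigma(X)}=\trace{X\sqrt{\sigma}X\sqrt{\sigma}}=\trace{(\sigma^{1/4}X\sigma^{1/4})^2}\geq0$, with equality only for $X=0$ because $\sigma$ has full rank. Theorem~\ref{detailedbalancethm} then yields $\Norm{\cT^n-\cT_\infty^n}{\infty}\leq\mu^n\Norm{\cB_\sigma^{1/2}}{\infty}\Norm{\cB_\sigma^{-1/2}}{\infty}$, where $\Norm{\cdot}{\infty}$ is the $2$-to-$2$ operator norm on $\linops{\cV}$.

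Next I would evaluate the two operator norms. Writing $\sigma=\sum_ip_i\proj{i}{i}$ in its eigenbasis, one checks that $\cB_\sigma$ acts on the matrix unit $\proj{i}{j}$ by multiplication with $\sqrt{p_ip_j}$; it is self-adjoint for the Hilbert--Schmidt product, and its positive-definite square root is $\cB_\sigma^{1/2}(X)=\sigma^{1/4}X\sigma^{1/4}$ (square this to recover $\cB_\sigma$ and invoke uniqueness of the square root), and similarly $\cB_\sigma^{-1/2}(X)=\sigma^{-1/4}X\sigma^{-1/4}$. Since the eigenvalues of $\cB_\sigma^{\pm1/2}$ are $(p_ip_j)^{\pm1/4}$, the operator norms are the largest of these, $\Norm{\cB_\sigma^{1/2}}{\infty}=\lambda_{max}(\sigma)^{1/2}$ and $\Norm{\cB_\sigma^{-1/2}}{\infty}=\lambda_{min}(\sigma)^{-1/2}$, whence $\Norm{\cT^n-\cT_\infty^n}{\infty}\leq\mu^n\sqrt{\lambda_{max}(\sigma)/\lambda_{min}(\sigma)}$. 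The first displayed inequality of the corollary then follows from the norm equivalence~\eqref{equival}, $\Norm{\cdot}{1\rightarrow1}\leq D^{1/4}\Norm{\cdot}{\infty}$, with $D=d^2$ so that $D^{1/4}=\sqrt d$.

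The remaining estimates are elementary. For the second inequality of the first display I would use $\lambda_{max}(\sigma)\leq1$ together with $d\,\lambda_{min}(\sigma)\leq\trace{\sigma}=1$, giving $\sqrt d\,\sqrt{\lambda_{max}(\sigma)/\lambda_{min}(\sigma)}=\sqrt{d\,\lambda_{max}(\sigma)\lambda_{min}(\sigma)}\,/\,\lambda_{min}(\sigma)\leq1/\lambda_{min}(\sigma)$. For the Gibbs bound I would insert $\sigma=e^{-\beta H}/Z$ with $Z=\trace{e^{-\beta H}}$ into the weaker estimate $\mu^n/\lambda_{min}(\sigma)$: here $\lambda_{min}(\sigma)=e^{-\beta\lambda_{max}(H)}/Z$, while $Z\leq d\,e^{\beta\Norm{H}{\infty}}$ and $e^{\beta\lambda_{max}(H)}\leq e^{\beta\Norm{H}{\infty}}$, so that $1/\lambda_{min}(\sigma)\leq d\,e^{2\beta\Norm{H}{\infty}}$.

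There is no genuinely hard step here: the substance is entirely contained in Theorem~\ref{detailedbalancethm}. The only points that need care are confirming that $X\mapsto\sigma^{1/4}X\sigma^{1/4}$ really is the operator square root of $\cB_\sigma$ (immediate from the diagonalization above together with uniqueness of the positive-definite square root) and keeping the bookkeeping between the Hilbert-space dimension $d$ and the dimension $D=d^2$ of $\cV$ straight, which is exactly where the factor $\sqrt d$ in the bound originates.
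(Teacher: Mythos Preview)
Your proposal is correct and follows essentially the same route as the paper's proof: apply Theorem~\ref{detailedbalancethm} with $\cB=\cB_\sigma$, compute $\Norm{\cB_\sigma^{\pm1/2}}{\infty}$, pass to the $1$-to-$1$ norm via~\eqref{equival}, and then use $\lambda_{max}(\sigma)\leq1$, $\lambda_{min}(\sigma)\leq1/d$, and the Gibbs-state lower bound on $\lambda_{min}(\sigma)$. Your write-up is in fact more explicit than the paper's (you spell out positive-definiteness of $\cB_\sigma$, the identification $\cB_\sigma^{1/2}(X)=\sigma^{1/4}X\sigma^{1/4}$, and the $D=d^2$ bookkeeping), but the argument is the same.
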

\begin{proof}[Proof of Corollary \ref{coroquantdetailed}]
The conditions on $\cT$ imply that it is detailed balanced w.r.t.~the map $\cB_\sigma$ defined above. Computing $\Norm{\cB_\sigma^{1/2}}{\infty}=\sqrt{\lambda_{max}(\sigma)}$ and $\Norm{\cB_\sigma^{-1/2}}{\infty}=1/\sqrt{\lambda_{min}(\sigma)}$ and considering the norm equivalence \eqref{equival} and bounding $\lambda_{min}(\sigma)\leq1/d$ and $\lambda_{max}(\sigma)\leq1$, we get the first assertion from Theorem \ref{detailedbalancethm}. In case of a thermal state, the second assertion follows from
$$\lambda_{min}\left(\frac{e^{-\beta H}}{\trace{e^{-\beta H}}}\right)~\geq~\frac{e^{-\beta\Norm{H}{\infty}}}{\trace{e^{\beta\Norm{H}{\infty}\ii_d}}}~=~\frac{e^{-2\beta\Norm{H}{\infty}}}{d}~.$$
\end{proof}

This corollary provides a possible way for proving that a state preparation or algorithm is \emph{efficient} in the sense of computational complexity \cite{NielsenChuang}. More concretely, for each $N$, consider a system of $N$ particles (spins), each with finite Hilbert space dimension $s<\infty$, and a Hamiltonian $H_N$ on each system. In many physical situations the Hamiltonian will be bounded by some polynomial of the particle number, $\Norm{H_N}{\infty}\leq c_HN^k$; this occurs for example if $H_N=\sum_iH_{N,i}$ is a sum of $k$-local terms that are uniformly bounded  by $c_H$. Assume further that the thermal state $\sigma_N=e^{-\beta H_N}/\trace{e^{-\beta H_N}}$ at inverse temperature $\beta\in[0,\infty)$ is a fixed point of the positive trace-preserving map $\cT_N$, and that $\cT_N$ satisfies detailed balanced w.r.t.\ $\cB_{\sigma_N}$. This assumption may be fulfilled, e.g., by Gibbs dynamics in a Markov chain Monte Carlo algorithm \cite{metro}. Lastly, assume that the \emph{spectral gap} of $\cT_N$ is asymptotically lower bounded by an inverse polynomial $c_\mu/N^\alpha$ of $N$ (where $c_\mu>0$), i.e., the eigenvalue $1$ corresponding to $\sigma_N$ is the only eigenvalue of $\cT_N$ with modulus $1$ whereas $|\lambda_i|\leq1-c_\mu/N^\alpha$ for all other eigenvalues. Among these assumptions, when they apply, the latter one is usually the hardest to prove in a given situation.

Under these presuppositions, the evolution operator $\cT_N$ prepares the final state $\sigma_N$ efficiently in the system size $N$. More precisely, for any initial state $\rho_N$ of the system, the time-evolved state $\cT_N^n(\rho_N)$ after $n$ steps will be $\varepsilon$-close in trace-norm to the thermal state $\sigma_N$ (i.e.~$\Norm{\cT_N^n(\rho_N)-\sigma_N}{1}\leq\varepsilon$) if
\begin{align}
n~\geq~\frac{N^\alpha}{c_\mu}\left(2\beta c_H N^k+N\log s+\log\frac{1}{\varepsilon}\right)~.\label{convergencetimequantumdetailed}
\end{align}
This means that the runtime to $\varepsilon$-convergence scales at most polynomially in the particle number $N$ and polylogarithmically in the desired accuracy $\varepsilon\in(0,1]$, which proves efficient state preparation.

For a proof of the runtime bound Inequality~\eqref{convergencetimequantumdetailed}, note that the dimension of the $N$-partite system is $d_N=s^N$ and that, due to the spectral gap condition, $(\cT_N)_\infty(\rho_N)=\sigma_N$ for any state $\rho_N$, which implies $\Norm{\cT_N^n(\rho_N)-\sigma_N}{1}\leq\Norm{\cT_N^n-(\cT_N)_\infty^n}{1-1}$. Finally, we use
$$\mu~\leq~1-\frac{c_\mu}{N^\alpha}~\leq~e^{-c_\mu/N^\alpha}$$
in the Gibbs state bound from Corollary \ref{coroquantdetailed} and requiring the latter to be at most $\varepsilon$ shows that the condition in Inequality~\eqref{convergencetimequantumdetailed} is sufficient for $\varepsilon$-convergence.

If one wants to bound the diamond norm $\Norm{\cT^n-\cT_\infty^n}{\diamond}$ between the actual and the asymptotic evolution in Corollary \ref{coroquantdetailed} instead of the trace-norm, then by Inequality~\eqref{diamond1to1} one incurs another factor $d$ (or $1/\lambda_{min}(\sigma)$) in the upper bounds. This however does not affect the efficiency statement just obtained, as the asymptotic dynamics $(\cT_N)_\infty$ is still reached, up to $\varepsilon$, in polynomial time.

\subsection{An $\ell^2$ bound}\label{llsec}
In this Subsection, again based on the detailed balance condition, we derive a sharper convergence bound than in Subsection \ref{subsectgeneraldetailedbalance}, taking into account all eigenvalues and eigenvectors of the transition map $\cT$. The special case of this bound for classical Markov processes has been used to prove so-called cutoff dynamics \cite{cutoff,group,mixing}. After describing the approach for general bounded semigroups obeying detailed balance, we will specialize to quantum Markov chains.

Recall from above that, if $\cT\in\linops{\cV}$ is detailed balanced w.r.t.~$\cB$, its eigenvalues $\lambda_i$ are real. Furthermore, as $\cR:=\cB^{-1/2}\cT\cB^{1/2}$ is a Hermitian operator, it has a complete orthonormal eigenbasis $\{x_i\}_i$, i.e.~$\cR(x_i)=\lambda_ix_i$. From this we can define an eigensystem of the adjoint $\cT^*$, which will play a prominent role in the bound:
$$y_i~:=~\cB^{-1/2}(x_i)\,,\quad\text{which implies~}\cT^*(y_i)~=~\lambda_i y_i\,.$$
$\{y_i\}_i$ could alternatively be chosen as any eigensystem of $\cT^*$ that is orthonormal w.r.t.~the weighted scalar product $\braket{\cdot}{\cB(\cdot)}$.

The spectral decomposition $\cR(v)=\sum_i\lambda_i \braket{x_i}{v} x_i$ now gives:
\begin{align*}
\cB^{-1/2}\cT^n(v)&=~\cR^n\cB^{-1/2}(v)\\
&=~\sum_{i=1}^D\lambda_i^n\braket{x_i}{\cB^{-1/2}(v)}x_i\\
&=~\sum_{i=1}^D\lambda_i^n\braket{\cB^{-1/2}(x_i)}{v}x_i~=~\sum_{i=1}^D\lambda_i^n\braket{y_i}{v}x_i~.
\end{align*}
Recognizing that $\cB^{1/2}(x_i)$ is the right-eigenvector of $\cT$ corresponding to $y_i$, the terms with $|\lambda_i|=1$ in the last expression (which we assume to be $i=r+1,\ldots,n$) correspond to the asymptotic evolution $\cT_\infty^n$. We can thus write
\begin{align*}
\cB^{-1/2}(\cT^n-\cT_\infty^n)(v)~=~\sum_{i=1}^r\lambda_i^n\braket{y_i}{v}x_i~,
\end{align*}
which, together with the fact that $\{x_i\}$ is an orthonormal system, gives by squaring:
\begin{align}
\Norm{(\cT^n-\cT_\infty^n)(v)}{2,\cB^{-1}}^2&:=~\braket{(\cT^n-\cT_\infty^n)(v)}{\cB^{-1}\left(\cT^n-\cT_\infty^n\right)(v)}
=~\sum_{i=1}^{r}\lambda_i^{2n}|\braket{y_i}{v}|^2~.\label{inconvl2bound}
\end{align}
This equality relates the eigensystem corresponding to the eigenvalues with modulus smaller than $1$ to the convergence in a suitably modified Hilbert norm. By itself this relation does not seem very useful, although one can derive Theorem \ref{detailedbalancethm} from it by rescaling the modified scalar product back to the originally given one.

When specializing to the quantum case, however, we can make a connection to the induced trace-norm, and thereby strengthen Corollary \ref{coroquantdetailed}:
\begin{proposition}\label{quantdetbalwitheigenvaluesprop}
Let $\cT:\cM_d\to\cM_d$ be a positive trace-preserving map, and $\sigma\in\cM_d$ be a full-rank density matrix (i.e.~$\trace{\sigma}=1$, $\sigma>0$) such that the detailed balance condition
$$\sqrt{\sigma}\cT^*(X)\sqrt{\sigma}~=~\cT(\sqrt{\sigma}X\sqrt{\sigma})\quad\forall X\in\cM_d$$
holds. Let $\{\lambda_i\}_{i=1}^r$ be the part of the spectrum of $\cT$ in the open interval $(-1,1)$, and $Y_i$ be the corresponding eigenvectors of the adjoint map $\cT^*$, orthonormal in the sense that $\trace{Y_i^*\sigma^{1/2}Y_j\sigma^{1/2}}=\delta_{ij}$. Then, for every $Z\in\cM_d$ (e.g.~a quantum state):
\begin{align}
\Norm{(\cT^n-\cT_\infty^n)(Z)}{1}^2~\leq~\sum_{i=1}^r|\trace{Y_i^*Z}|^2\,\lambda_i^{2n}~.\label{qDBboundWithLambdas}
\end{align}
\end{proposition}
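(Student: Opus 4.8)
The plan is to combine the general identity~\eqref{inconvl2bound} with a single estimate that bounds the trace-norm by the weighted $\ell^2$-quantity appearing there, specialised to the quantum detailed balance map $\cB_\sigma(X)=\sqrt{\sigma}X\sqrt{\sigma}$. For this choice one has $\cB_\sigma^{-1/2}(X)=\sigma^{-1/4}X\sigma^{-1/4}$, so that for any $W\in\cM_d$ the weighted norm from~\eqref{inconvl2bound} reads $\Norm{W}{2,\cB_\sigma^{-1}}^2=\trace{W^*\sigma^{-1/2}W\sigma^{-1/2}}=\Norm{\sigma^{-1/4}W\sigma^{-1/4}}{2}^2$. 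Moreover, in the Hilbert--Schmidt picture the vectors $y_i$ of the general framework are exactly the matrices $Y_i$ with $\cT^*(Y_i)=\lambda_iY_i$ that are orthonormal with respect to $\braket{\cdot}{\cB_\sigma(\cdot)}$, i.e.\ $\trace{Y_i^*\sqrt{\sigma}Y_j\sqrt{\sigma}}=\delta_{ij}$, and $\braket{y_i}{Z}=\trace{Y_i^*Z}$. Since $\cR=\cB_\sigma^{-1/2}\cT\cB_\sigma^{1/2}$ preserves Hermiticity and is Hermitian for the Hilbert--Schmidt inner product on all of $\cM_d$ (a Hermitian-orthonormal eigenbasis of the real space of Hermitian matrices is automatically a complex orthonormal eigenbasis of $\cM_d$), the identity~\eqref{inconvl2bound} is valid for every $Z\in\cM_d$, not only for Hermitian ones. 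Hence it suffices to establish $\Norm{W}{1}\leq\Norm{\sigma^{-1/4}W\sigma^{-1/4}}{2}$ for arbitrary $W\in\cM_d$.

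To prove this weighted norm inequality I would use trace-norm duality, $\Norm{W}{1}=\sup_{\Norm{O}{\infty}\leq1}\abs{\trace{OW}}$, and insert the identity $OW=(\sigma^{1/4}O\sigma^{1/4})(\sigma^{-1/4}W\sigma^{-1/4})$ under the trace. Cauchy--Schwarz for the Hilbert--Schmidt inner product gives $\abs{\trace{OW}}\leq\Norm{\sigma^{1/4}O\sigma^{1/4}}{2}\,\Norm{\sigma^{-1/4}W\sigma^{-1/4}}{2}$, and the first factor is controlled by H\"older's inequality for Schatten norms: $\Norm{\sigma^{1/4}O\sigma^{1/4}}{2}\leq\Norm{\sigma^{1/4}}{4}\Norm{O}{\infty}\Norm{\sigma^{1/4}}{4}=(\trace{\sigma})^{1/2}\Norm{O}{\infty}\leq1$. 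Taking the supremum over $O$ yields the claimed bound.

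Putting the pieces together, with $W=(\cT^n-\cT_\infty^n)(Z)$ one obtains
$$\Norm{(\cT^n-\cT_\infty^n)(Z)}{1}^2\ \leq\ \Norm{\sigma^{-1/4}(\cT^n-\cT_\infty^n)(Z)\,\sigma^{-1/4}}{2}^2\ =\ \Norm{(\cT^n-\cT_\infty^n)(Z)}{2,\cB_\sigma^{-1}}^2\ =\ \sum_{i=1}^r\lambda_i^{2n}\abs{\trace{Y_i^*Z}}^2,$$
the last equality being~\eqref{inconvl2bound} for $v=Z$ together with $\braket{y_i}{Z}=\trace{Y_i^*Z}$; this is exactly~\eqref{qDBboundWithLambdas}.

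The main obstacle is the weighted trace-norm inequality $\Norm{W}{1}\leq\Norm{\sigma^{-1/4}W\sigma^{-1/4}}{2}$: it is the only genuinely new analytic ingredient, and one must handle the two-sided weighting carefully and make sure it holds for \emph{all} $W\in\cM_d$ rather than Hermitian $W$ only, so that it applies to $W=(\cT^n-\cT_\infty^n)(Z)$ with non-Hermitian $Z$. A related point to check is the legitimacy of the identification $\braket{y_i}{Z}=\trace{Y_i^*Z}$ and of the extension of~\eqref{inconvl2bound} to arbitrary $Z\in\cM_d$, which follows from $\cR$ being self-adjoint for the Hilbert--Schmidt product; the rest is bookkeeping.
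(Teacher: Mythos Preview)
Your proof is correct and follows the same overall route as the paper: reduce via the general identity~\eqref{inconvl2bound} applied to $\cB_\sigma$ to the key inequality $\Norm{A}{1}^2\leq\trace{A^*\sigma^{-1/2}A\sigma^{-1/2}}$ for all $A\in\cM_d$, and then establish the latter. The only difference is in that final step: the paper writes $\Norm{A}{1}=\trace{UA}$ via polar decomposition and applies Cauchy--Schwarz twice, whereas you use trace-norm duality together with H\"older's inequality for Schatten norms---a slightly slicker variant yielding the same bound.
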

\begin{proof}
One can apply the preceding general steps to the map $\cB_\sigma(X):=\sqrt{\sigma}X\sqrt{\sigma}$ and the inner product $\braket{Y}{X}:=\trace{Y^*X}$. Then it remains to show that, for $A:=(\cT^n-\cT_\infty^n)(Z)$,
$$\Norm{A}{1}^2~\leq~\braket{A}{\cB_\sigma^{-1}(A)}~=~\trace{A^*\sigma^{-1/2}A\sigma^{-1/2}}~.$$

 To see this inequality holds in fact for all $A\in\cM_d$, use the polar decomposition and let $U\in\cM_d$ be a unitary such that $UA$ is positive-semidefinite. Then cyclicity of the trace and two applications of the Cauchy-Schwarz inequality give:
\begin{align*}
\Norm{A}{1}^2&=~\left|{\rm tr}\left[UA\right]\right|^2~=~\left|{\rm tr}\left[(\sigma^{1/4}U\sigma^{1/4})(\sigma^{-1/4}A\sigma^{-1/4})\right]\right|^2\\
&\leq~{\rm tr}\left[\sigma^{1/4}U\sigma^{1/2}U^*\sigma^{1/4}\right]\,{\rm tr}\left[\sigma^{-1/4}A^*\sigma^{-1/2}A\sigma^{-1/4}\right]\\
&=~{\rm tr}\left[U\sigma^{1/2}U^*\sigma^{1/2}\right]\,{\rm tr}\left[A^*\sigma^{-1/2}A\sigma^{-1/2}\right]\\
&\leq~\sqrt{{\rm tr}\left[U\sigma U^*\right]{\rm tr}\left[\sigma^{1/2}UU^*\sigma^{1/2}\right]}\,{\rm tr}\left[A^*\sigma^{-1/2}A\sigma^{-1/2}\right]\\
&=~\trace{\sigma}\,{\rm tr}\left[A^*\sigma^{-1/2}A\sigma^{-1/2}\right]\\
&=~{\rm tr}\left[A^*\sigma^{-1/2}A\sigma^{-1/2}\right]~.
\end{align*}
\end{proof}

Detailed balance of a quantum map $\cT$ w.r.t.~certain other maps $(\cB=\Omega_\sigma^k)^{-1}$ has been defined in \cite{temme1}  so that the family $(\Omega_\sigma^k)^{-1}$ includes the map $\cB_\sigma$ from above. These detailed balance conditions also result in bounds that look essentially like Equation~\eqref{qDBboundWithLambdas}, except that in this more general case the $Y_i$ should be orthonormal in the sense that $\trace{Y_i^*\cB(Y_j)}=\delta_{ij}$. For a proof, note that Equation~\eqref{inconvl2bound} holds generally, and the proof of Lemma 5 in \cite{temme1} shows $\Norm{(\cT^n-\cT_\infty^n)(Z)}{1}^2\leq\Norm{(\cT^n-\cT_\infty^n)(v)}{2,\cB^{-1}}^2$ (the right-hand-side of the last inequality is a $\chi^2$-divergence as considered in \cite{temme1} only if $\cT$ has merely one eigenvalue of modulus $1$, however).

For classical detailed balanced Markov chains the analog of the convergence bound Inequality~\eqref{qDBboundWithLambdas}, which looks very similar in this setting \cite{mixing}, is often used for demonstrating the upper bound in cutoff results (cf.~\cite{cutoff,group} for an over overview and references). In this setting, most commonly the evolution $\cT$ leads to a unique fixed point $\sigma$ (often the maximally mixed state), so that the asymptotic evolution would simply be the ``projection onto the fixed point'', i.e.~$\cT_\infty^n(X)=\sigma\,\trace{X}$ for $n\geq1$. Of course, for Proposition \ref{quantdetbalwitheigenvaluesprop} to be useful one also needs knowledge about the normalized eigenvectors $Y_i$.
\section{Conclusions}
The conceptual innovation of this article is to provide a framework within which eigenvalue estimates can be derived for Markov chains. We apply this framework to study the relation between the spectrum of a transition map and the speed of convergence of the resulting Markov chain. Our approach yields a significant improvement on the wide-spread and important convergence estimates based on Jordan and Schur normal forms. In the analysis of the sensitivity of the stationary states of a Markov chain our bounds can be used to improve existing stability results and our methods yield strong resolvent estimates for Markov transition maps. On the purely mathematical side the main contribution of this article is to bound the quantity $\Norm{z^n}{W/mW}$ for a given polynomial $m$, which is essentially a version of Kreiss' matrix theorem with given spectral data.

\begin{acknowledgements}
We acknowledge financial support by the Elite Network of Bavaria (ENB) project QCCC, the CHIST-ERA/BMBF project CQC, the Marie-Curie project QUINTYL and the Alfried Krupp von Bohlen und Halbach-Stiftung
\end{acknowledgements}

\bibliographystyle{abbrv}

\appendix

\section{An upper bound on a single Blaschke factor}\label{Blaschke}
For completeness we prove the following short lemma.
\begin{lemma}
Let $\abs{\lambda}<c\leq1$ then
\begin{align*}
\sup_{\abs{z}=c}\left|\frac{1-\bar{\lambda}z}{z-\lambda}\right|=\frac{1-\abs{\lambda}c}{c-\abs{\lambda}}.
\end{align*}
\end{lemma}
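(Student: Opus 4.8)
The plan is to turn this into a one-variable monotonicity problem. First I would reduce to the case that $\lambda$ is real and nonnegative: writing $\lambda=\abs{\lambda}e^{i\alpha}$ and substituting $z=e^{i\alpha}w$ — a bijection of the circle $\abs{z}=c$ onto $\abs{w}=c$ — one has $\bar\lambda z=\abs{\lambda}w$ and $z-\lambda=e^{i\alpha}(w-\abs{\lambda})$, so
\begin{align*}
\left|\frac{1-\bar\lambda z}{z-\lambda}\right|=\left|\frac{1-\abs{\lambda}w}{w-\abs{\lambda}}\right|,
\end{align*}
and the supremum over $\abs{z}=c$ equals the supremum over $\abs{w}=c$. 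Hence it suffices to treat $\lambda=a:=\abs{\lambda}\in[0,c)$. If $a=0$ the quantity is the constant $1/c$ and the formula holds, so assume $a>0$.

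Next I would parametrize $w=ce^{i\phi}$ and compute, with $t:=\cos\phi\in[-1,1]$,
\begin{align*}
\left|\frac{1-aw}{w-a}\right|^2=\frac{1+a^2c^2-2act}{c^2+a^2-2act}=:g(t).
\end{align*}
The denominator is bounded below on $[-1,1]$ by $(c-a)^2>0$, so $g$ is a genuine Möbius function of $t$ there, and a short differentiation gives
\begin{align*}
g'(t)=\frac{2ac\,(1-a^2)(1-c^2)}{\left(c^2+a^2-2act\right)^2}\ \geq\ 0,
\end{align*}
using $0<a<c\leq1$. Thus $g$ is non-decreasing on $[-1,1]$ and is maximized at $t=1$, i.e.\ at $w=c$.

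Evaluating at $w=c$ gives $g(1)=(1-ac)^2/(c-a)^2$; since $0\leq ac<c^2\leq1$ and $c-a>0$, both $1-ac$ and $c-a$ are positive, so taking square roots yields
\begin{align*}
\sup_{\abs{z}=c}\left|\frac{1-\bar\lambda z}{z-\lambda}\right|=\frac{1-ac}{c-a}=\frac{1-\abs{\lambda}c}{c-\abs{\lambda}},
\end{align*}
as claimed. The only point requiring care is the sign of $g'$, and this is precisely where the hypothesis $c\leq1$ enters; the rest is elementary algebra and the rotation trick. Alternatively, one could note that $w\mapsto\frac{1-aw}{w-a}$ is, up to a unimodular factor, the reciprocal of a disc automorphism and argue via the fact that Möbius transformations carry circles to circles, but the direct computation above is fully self-contained.
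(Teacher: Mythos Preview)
Your proof is correct and follows essentially the same route as the paper: both compute the squared modulus as a ratio of linear functions of a single real variable (the paper uses $x=2|\lambda|c-2\Re(\lambda\bar z)$, you use $t=\cos\phi$ after rotating $\lambda$ to the positive axis) and then establish monotonicity. The only cosmetic difference is that the paper invokes the elementary inequality $\frac{\alpha+x}{\beta+x}\le\frac{\alpha}{\beta}$ for $0<\beta<\alpha$, $x\ge0$, whereas you differentiate explicitly; both identify the maximizer as $z=\frac{c}{|\lambda|}\lambda$.
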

\begin{proof}
We rewrite the absolute value on the left hand side using the fact that $\abs{a}^2=a\bar{a}$ for all $a\in\mathbb{C}$. This gives
\begin{align*}
\left|\frac{1-\bar{\lambda}z}{z-\lambda}\right|^2=\frac{(1-\abs{\lambda}c)^2+2\abs{\lambda}c-2\Re(\lambda\bar{z})}{(c-\abs{\lambda})^2+2\abs{\lambda}c-2\Re(\lambda\bar{z})}.
\end{align*}
Note now that for $0<\beta<\alpha$ and $0\leq x$ we have
\begin{align*}
\frac{\alpha+x}{\beta+x}\leq\frac{\alpha}{\beta}.
\end{align*}
Hence,
\begin{align*}
\frac{(1-\abs{\lambda}c)^2+2\abs{\lambda}c-2\Re(\lambda\bar{z})}{(c-\abs{\lambda})^2+2\abs{\lambda}c-2\Re(\lambda\bar{z})}\leq\frac{(1-\abs{\lambda}c)^2}{(c-\abs{\lambda})^2}.
\end{align*}
Finally, we note that the supremum is attained for $z=\frac{c}{\abs{\lambda}}\lambda$.
\end{proof}
\end{document}